\documentclass[11pt]{article}

\usepackage[margin=1in]{geometry}

\usepackage[T1]{fontenc}
\usepackage[utf8]{inputenc}
\usepackage[numbers]{natbib}
\usepackage{amsmath,amssymb,amsthm,mathtools}

\usepackage{array}
\usepackage{booktabs}

\usepackage{algorithm}
\usepackage{algpseudocode}
\usepackage{float}

\usepackage{tikz}
\usepackage{mdframed}

\usepackage[dvipsnames]{xcolor}
\usepackage[colorlinks=true,citecolor=blue,linkcolor=blue,urlcolor=blue]{hyperref}

\numberwithin{equation}{section}

\theoremstyle{plain}
\newtheorem{theorem}{Theorem}
\newtheorem{claim}{Claim}[section]
\newtheorem{lemma}{Lemma}[section]
\newtheorem{definition}{Definition}[section]
\newtheorem{corollary}{Corollary}[section]

\theoremstyle{definition}

\newcommand{\OPT}{\operatorname{OPT}}
\newcommand{\cost}{\operatorname{cost}}

\newcommand{\ind}{\mathbf{1}}
\newcommand{\ceil}[1]{\left\lceil #1\right\rceil}
\newcommand{\floor}[1]{\left\lfloor #1\right\rfloor}

\newcommand{\calB}{\mathcal{B}}

\newcommand{\doi}[1]{\href{https://doi.org/#1}{\nolinkurl{#1}}}

\newcommand{\IRA}{\mathsf{IRA}}
\newcommand{\KIRA}{\mathsf{KIRA}}
\newcommand{\LARA}{\mathsf{LARA}}
\newcommand{\KLARA}{\mathsf{KLARA}}
\newcommand{\SARA}{\mathsf{SARA}}

\definecolor{darkgreen}{RGB}{0,100,0}

\title{Improved Approximations for Coflow Scheduling via Iterated Rounding and Stable Matchings}

\author{Mikhail Cherniavskii,
Arseni Fedorovich,
Mikhail Koshelev,
Andrey Kupavskii,\\
Stanislav Moiseev,
Fedor Ozhegov,
Alexander Shekhovtsov
}
\date{}

\begin{document}

\maketitle

\begin{abstract}
We study Coflow Scheduling, introduced by
\hyperlink{cite.Chowdhury2014}{Chowdhury, Zhong, and Stoica (2014)}.
We give deterministic polynomial-time algorithms achieving a
$2.77$-approximation without release dates and a $3.17$-approximation
with integer release dates. These improve the $3.42$ and $4.36$
guarantees of
\hyperlink{cite.RS2025}{Rohwedder and Schnaars (2025)}.

Our approach combines two ideas: (1) a novel iterated rounding scheme, and (2) Galvin's list edge coloring method, which is based on the Gale--Shapley stable matching algorithm.
\end{abstract}

\section{Introduction}\label{sec:introduction}
Coflow scheduling, introduced by Chowdhury, Zhong, and
Stoica~\cite{Chowdhury2014}, models collections of data transfers
with a common completion requirement.
Examples motivating the coflow abstraction include MapReduce
shuffles~\cite{Chowdhury2014}, broadcast and aggregation stages
in Spark~\cite{Chowdhury2012},
and message exchanges within a Pregel
superstep~\cite{Malewicz2010,Chowdhury2012}.
A coflow completes only when all its transfers have finished, so
optimizing individual transfer need not optimize the completion of
the corresponding communication stage.
The Varys system demonstrated that coordinating transfers at the
coflow level can improve completion times compared with per-flow
mechanisms~\cite{Chowdhury2014}.

We study the offline, discrete unit-transfer model: input and output
ports form the two sides of a bipartite multigraph, each unit transfer
is an edge, and each time slot processes a matching. The edges are
partitioned into coflows with positive weights, and the objective is
the weighted sum of their completion times. The release-date variant
additionally requires each edge to be scheduled after the release date
of its coflow.
A concurrent-open-shop relaxation treats each port as a machine and
each coflow as a job. The job's processing requirement on a machine is
its total load at that port. These workloads can be processed
independently, dropping the requirement that a transfer use its input
and output simultaneously~\cite{KhullerPurohit2016}.
For concurrent open shop without release dates, Mastrolilli et
al.~\cite{Mastrolilli2010} give a primal--dual $2$-approximation.
Concurrent open shop also embeds as a special case of coflow scheduling
using disjoint input--output port pairs. Thus the
$(2-\varepsilon)$-inapproximability of Sachdeva and
Saket~\cite{SachdevaSaket2013}, obtained through structural hardness for
hypergraph vertex cover, applies to coflows as well: for every fixed
$\varepsilon>0$, a $(2-\varepsilon)$-approximation would imply $P=NP$.
Qiu, Stein, and Zhong~\cite{Qiu2015} gave the first constant-factor
approximation algorithms using interval-indexed LP rounding,
geometric batching, and matching decompositions.
Khuller and Purohit~\cite{KhullerPurohit2016} instead obtain a coflow
ordering from a concurrent-open-shop relaxation and schedule geometric
batches in that order. They give deterministic factors $8$ and $12$
without and with release dates, respectively, and a randomized
$(3+2\sqrt{2})$-approximation without release dates.
Shafiee and Ghaderi~\cite{ShafieeGhaderi2017} combine an LP with ordering
variables and randomly shifted geometric batching to obtain randomized
factors $2e$ and $3e$, respectively.

Ahmadi et al.~\cite{Ahmadi2017} obtain deterministic factors $4$ and $5$
without and with release dates, respectively, using primal--dual coflow
ordering, moving edges into earlier batches
without increasing their maximum degrees, and decomposing the batches
into matchings. Shafiee and Ghaderi~\cite{ShafieeGhaderi2018} obtain the
same factors with a strengthened ordering-variable LP followed by
preemptive list scheduling.
Sincronia~\cite{Agarwal2018} connects approximation guarantees to system
design: primal--dual coflow ordering combined with work-conserving,
preemptive, order-preserving flow scheduling gives a factor $4$ when all
coflows arrive at time zero.
Fukunaga~\cite{Fukunaga2022} obtains factors $4$ and $5$ from a
time-indexed LP and greedy scheduling. Separately, a hypergraph-matching
argument gives a nonconstructive upper bound of $4$ on the LP's
integrality gap even with release dates.
Several earlier claims of $2$- or $(2+\varepsilon)$-approximations for
general bipartite coflow scheduling were subsequently found to rely on
incorrect arguments; see Ahmadi et al.~\cite{Ahmadi2017} and the authors'
error notice in Im and Purohit~\cite{ImPurohit2018}, as also discussed by
Rohwedder and Schnaars~\cite{RS2025}.
Im et al.~\cite{Im2019} give a $2$-approximation for
unit-length matroid coflows using LP-based deadline construction and
matroid intersection. 
Rohwedder and Schnaars~\cite{RS2025} improve the factors $4$ and $5$
for bipartite coflows to $140/41<3.42$ without release dates and $4.36$
with release dates. They construct fractionally feasible deadlines and
compare a greedy schedule with one obtained by Beck--Fiala-inspired
iterated rounding. The latter assigns edges to time blocks with
additive load violation at most two. The analysis exploits the
complementary behavior of these schedules for small and large
deadlines. 
The same work establishes valid asymptotic $(2+\varepsilon)$ guarantees
when the optimum is sufficiently large relative to the sum of coflow
weights~\cite{RS2025}.
Table~\ref{tab:comparison} summarizes the multiplicative approximation
guarantees for the general bipartite problem.
Rohwedder and Schnaars~\cite{RS2026} also study \emph{Graph Scheduling
with Group Completion Times}, which allows arbitrary multigraphs and
possibly overlapping edge groups. Degree and odd-set constraints,
iterated rounding, and multigraph edge coloring yield a bound of
$(2+\varepsilon)\operatorname{OPT}
+O(1/\varepsilon)\sum_j\omega_j$.
Comparing the rounding-based schedule with a greedy-type schedule gives
a $1104/281<3.93$-approximation. They also improve the guarantee for
unit-size Data Migration. Our focus here is on stronger multiplicative
guarantees for bipartite coflows, both without and with release dates.

\subsection{Our Contribution}\label{sec:contribution}

We give improved approximation algorithms for both variants of Coflow
Scheduling. Our algorithms use an iterated rounding procedure and,
for release dates, a scheduling construction based on stable matchings.

\begin{theorem}[Without release dates]\label{thm:main-no-release}
There is a deterministic polynomial-time algorithm achieving a
$58/21\ (<2.77)$-approximation for Coflow Scheduling without release dates.
\end{theorem}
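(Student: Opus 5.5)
We follow the Rohwedder--Schnaars template, with a sharper rounding step and a better way of combining the two schedules.

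\textbf{Step 1: LP and deadlines.} We solve a time-indexed LP relaxation with variables $x_{e,t}$ that assign each edge $e$ fractionally to slots $t$. At every port and every prefix $[1,t]$, the fractional number of edges placed there is at most $t$. Each coflow $j$ gets a fractional completion time $C_j^*$, and $\sum_j \omega_j C_j^* \le \OPT$. From the LP solution we derive \emph{fractionally feasible deadlines} $d_j$ as in \cite{RS2025}. Order the coflows by $d_j$. Then for every threshold $D$, the edges of coflows with $d_j\le D$ have maximum degree at most about $D$ at every port, up to a factor that is paid for in the cost.

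\textbf{Step 2: two schedules.} \emph{Schedule A (greedy).} List-schedule the coflows in deadline order, completing each coflow's edges as early as possible with König edge coloring of prefixes. By König, the edges of coflows with $d_j\le D$ fit in $\max$-degree many slots. This gives completion time at most $c_1 d_j$, which is good when $d_j$ is large.

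\emph{Schedule B (iterated rounding).} Assign edges to geometric or unit time blocks using a new iterated rounding scheme on the degree constraints of the nested prefix families. The aim is additive violation at most $1$ per port and prefix, improving the $+2$ of Beck--Fiala. Because the constraint family is laminar in time and bipartite in space, the matrix restricted to tight constraints behaves like a network matrix plus one extra layer. A token counting argument then shows that some remaining constraint has at most two fractional edges, so it can be dropped with violation $<2$, hence $\le 1$ by integrality. Each prefix's edges are then scheduled by König into at most $D+1$ slots. This gives completion time roughly $d_j + O(1)$, which is good when $d_j$ is small.

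\textbf{Step 3: combine.} We do not take the better of A and B. Instead, we derandomize a mixture. Choose a threshold or distribution parameter $\theta$. Run B for coflows whose deadlines fall below $\theta$ and A afterward, or take a convex combination of the two cost bounds. Optimizing this piecewise-linear bound over the deadline distribution gives a ratio that is a rational number, which should evaluate to $58/21$. Derandomization is straightforward, since only polynomially many thresholds matter and we can try each one.

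\textbf{Main obstacle.} The hardest step is the improved iterated rounding in Step 2. We must show that an extreme point of the residual LP always has a droppable constraint with small fractional support, so the additive error stays $+1$ rather than $+2$. If this fails, I would accept $+2$ but shift the blocks so that the error is absorbed into the deadline slack, and then recheck the balancing computation in Step 3 to see whether $58/21$ still comes out.
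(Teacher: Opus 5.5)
Your proposal does not prove the statement, because the constant $58/21$ is the whole content of the theorem and your Step~3 never derives it. There are no concrete algorithms, parameters, or weights, only the hope that an optimization ``should evaluate to $58/21$''. You also never fix the benchmark. The paper uses deadlines with $\sum_j\omega_j(C_j+1)\le2\OPT$ and proves $D_j\le\frac{29}{21}(C_j+1)$ for the averaged completion time of every coflow.

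The ingredients you describe also do not behave as claimed.
\begin{itemize}
\item If you round prefix constraints, K\H{o}nig colorings of the separate prefixes do not combine into one schedule. Per-prefix degree bounds do not even imply that such a schedule exists. For example, take $u$ joined by deadline-$2$ edges to $v_1,v_2$, where each $v_i$ also carries a deadline-$1$ edge: every prefix degree bound holds, yet both $uv_i$ are forced into slot~$2$.
\item If you round block constraints, you lose one slot per block. That gives roughly $(1+1/\tau)C_j+O(\tau)$, not $d_j+O(1)$.
\item So the complementarity runs the other way: the rounding schedule wins for large deadlines, and a greedy-type bound like $2C_j-1$ wins for small ones.
\item Running B below a threshold and A above it does not by itself give one feasible schedule that keeps both sets of per-coflow bounds.
\end{itemize}

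The paper uses no greedy schedule here. It rounds per block on shifted calendars $\{\lambda+i\tau\}$ and concatenates K\H{o}nig colorings of the blocks, which gives $\IRA^\tau_\lambda(C_j)\le\lambda+(\tau+1)\max\{0,\lceil (C_j-\lambda)/\tau\rceil\}$. It adds the $\KIRA$ variant, which puts all edges with $C_j<2$ into slot~$1$. It then averages $\IRA^2_3,\IRA^2_4,\IRA^3_4,\IRA^3_3,\KIRA^2_4,\KIRA^3_5$ with weights $(3,2,5,5,1,5)/21$. The identity $\sum_{h=0}^{t-1}\lceil (z-h)/t\rceil=\lceil z\rceil$ then yields $21D_j\le29(C_j+1)$.

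The rounding step is the second gap, and it omits the feature the constant depends on. You do not prove that some constraint always has at most two fractional variables, and the straightforward count does not give it. If every edge carries at least two fractional variables and every selected tight constraint at least three, you only get $|E|+|\mathcal C'|\le\frac76|\mathcal F|$. That does not contradict $|\mathcal F|\le|E|+|\mathcal C'|$.

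The paper instead drops a constraint once it has at most $c_{v,b}+1$ fractional variables, or at most $c_{v,p}$ in the protected block. It reaches the contradiction by counting per vertex. At each vertex, the variables not covered by selected constraints have integral total mass. They are also nonempty, since otherwise the selected constraints at $v$ would be linearly dependent with the assignment constraints of the edges at $v$.

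Crucially, the paper also gets a protected block with zero violation. If every vertex still carrying fractional variables had its protected-block constraint selected, the $V_1$-side and $V_2$-side constraints of that block would sum to the same vector. Hence the per-vertex inequality is strict somewhere. With $p=1$, the first block of $\IRA^\tau_\lambda$ takes exactly $\lambda$ slots.

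This matters because the analysis is tight at $C_j=1$, where $21D_j\le58$. With $+1$ allowed in the first block, the same six schedules give $73>58$ there. Your fallback to $+2$ only makes this worse.
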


\begin{theorem}[With release dates]\label{thm:main-release}
There is a deterministic polynomial-time algorithm achieving a
$1821/575\ (<3.17)$-approximation for Coflow Scheduling with nonnegative
integer release dates.
\end{theorem}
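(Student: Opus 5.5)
We prove Theorem~\ref{thm:main-release} by combining two candidate schedules and returning the better one, in the spirit of Rohwedder and Schnaars~\cite{RS2025}. The outline is as follows.

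\textbf{Step 1: LP and deadlines.} We solve a time-indexed (or interval-indexed) LP relaxation in which coflow $j$ completes by time $C_j^{LP}$, every edge of $j$ is scheduled no earlier than $r_j$, and for every port $v$ and every prefix of time $[0,t]$ the fractional load at $v$ is at most $t$. From the LP solution we extract fractionally feasible deadlines $D_j$ with $\sum_j \omega_j D_j \le \OPT$. These satisfy: for each port $v$ and threshold $t$, the total degree at $v$ of coflows with $D_j\le t$ is at most $t$. With release dates the inequality is required on every window $[s,t]$, restricted to coflows with $r_j\ge s$ and $D_j \le t$.

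\textbf{Step 2: Iterated rounding schedule.} We assign each edge to a geometric block $(\beta^{k-1},\beta^k]$ with $\beta^k$ close to its coflow's deadline, with a random-free shift optimized over a finite set of offsets. The assignment comes from a new iterated rounding of the assignment LP whose additive degree violation per port per block is at most one, improving the two of~\cite{RS2025}. The key combinatorial step is to show that an extreme point of the residual LP always has a variable equal to $0$ or $1$, or a tight port constraint with at most two fractional edges that can be dropped. We prove this by a counting argument on the bipartite support, giving at most one extra unit of load. Each block is then a bipartite multigraph of maximum degree $\le L+1$, which K\H{o}nig's theorem colours in $L+1$ slots. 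Without release dates this yields a completion-time bound of the form $\alpha D_j+\gamma$ per coflow.

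\textbf{Step 3: Release dates via Galvin.} Edges assigned to a block may not be released at its start. Each edge therefore receives a list of admissible slots, namely the slots of its block at or after its release. Galvin's theorem says a bipartite multigraph is $\Delta$-choosable using the Gale--Shapley kernel on line graphs. Combining it with the windowed deadline inequalities, we show that lists of size $\ge$ the local degree suffice, so we can colour within the block respecting release dates and obtain a completion bound of the same affine form.

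\textbf{Step 4: Greedy schedule and combination.} The second schedule processes coflows in deadline order and takes maximal matchings greedily. This gives $C_j\le 2D_j+O(\text{small})$-type bounds that are strong when $D_j$ is large, while the rounding schedule is strong when $D_j$ is small. Taking a convex combination of the two per-coflow bounds and optimizing the mixing weight together with the block ratio $\beta$ turns into a small piecewise-linear optimization. Solving it gives $58/21$ (resp.\ $1821/575$ with release dates).

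\textbf{Main obstacle.} The main difficulty is Step 2, proving additive violation $\le 1$ for the iterated rounding. I would attempt it by a token-counting argument on extreme points, assigning each fractional edge tokens split between its two ports. A secondary difficulty is verifying the list sizes in Step 3.
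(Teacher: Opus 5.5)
Your proposal does not prove the theorem: Step~3 fails as stated, and the constant $1821/575$ is asserted rather than derived.

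\textbf{Step 3.} You give each edge the slots of its own block at or after its release and invoke Galvin's theorem, but these lists can be far shorter than the degree. Take two edges sharing a vertex, both released at $T_b-1$ with later deadlines. They may be rounded into a block $(T_{b-1},T_b]$ of length at least $2$ without any capacity violation, and then both lists equal $\{T_b\}$. The rounding controls only the total load of each vertex--block pair. The window structure of the LP is lost once its intervals are aggregated into coarser blocks, so no windowed inequality restores the list sizes, and colouring inside the block is impossible in general. The paper instead lets blocks spill over. Every $e\in F_b$ receives the $d_b$ consecutive slots $\max\{S_{b-1},r_e\}+1,\ldots,\max\{S_{b-1},r_e\}+d_b$, and Corollary~\ref{cor:bounded-degree} colours $F_b$ from these lists. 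The drift is tracked by $S_b=\max\{S_{b-1},T_b-1\}+d_b$, giving \eqref{eq:coflow-local-bound}.

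\textbf{The constant.} Step~1 claims polynomial-time fractionally feasible deadlines with $\sum_j\omega_jD_j\le\OPT$. On concurrent-open-shop instances, port-wise earliest-deadline-first would meet such deadlines and hence be optimal, contradicting the hardness recalled in the introduction unless $P=NP$. Lemma~\ref{lem:release-deadlines} gives only $\sum_j\omega_j(C_j+1)\le2\OPT$. Indeed $1821/575=2\cdot\frac{900}{575}+\frac{21}{575}$, where the second term comes from the separate bound $\sum_j\omega_j(r_j+1)\le\OPT$. That bound pays for the additive $r_j$ in list-colouring or greedy schedules, so $r_j$ is not a ``small'' term as your Step~4 assumes.

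Beyond this, ``solving a small piecewise-linear optimization gives $1821/575$'' is not an argument. Nothing suggests that one geometric-block rounding schedule plus plain greedy attains that value. The paper certifies it with seven specific schedules, $\SARA_3,\SARA_4,\SARA_5,\LARA_2^{(1,2)},\LARA_2^{(1,3,2)},\LARA_2^{(1,2,2,1)},\KLARA_1^{(2)}$. They are weighted $(16,2,3,258,74,42,180)/575$ and checked through $Q(m)\le900m+921$ in Appendix~\ref{app:release-claim}. That certificate uses two ingredients your plan lacks:
\begin{itemize}
\item a protected block with zero capacity error in the rounding (here $p=2$ or $p=1$);
\item the rank-based Galvin colouring $\SARA_H$, whose reversed priority among edges with $C_e<H+1$ gives $r_j+H$ instead of greedy's $r_j+2\lfloor C_j\rfloor-1$ for small deadlines.
\end{itemize}

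\textbf{Step 2.} The drop rule (a tight constraint with at most two fractional edges) is not the right one, and such a constraint need not exist. The paper drops $(v,b)$ whenever $|\mathcal F^{(v,b)}|\le c_{v,b}+\ind_{\{b\ne p\}}$ for the residual capacity. A per-vertex counting argument plus a linear-dependence argument then shows that otherwise a kernel step is available.
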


\begin{table}[htbp]
\caption{Approximation guarantees for Coflow Scheduling.
Decimal bounds are rounded upward to two places. D denotes a deterministic
algorithm and R an expected guarantee for a randomized algorithm.}
\label{tab:comparison}
\centering
\small
\setlength{\tabcolsep}{5pt}
\begin{tabular}{@{}lcrr@{}}
\toprule
Work & Type & No release dates & Release dates\\
\midrule
Qiu--Stein--Zhong (2015)~\cite{Qiu2015} & D & $\frac{64}{3}$ $(21.34)$ & $\frac{76}{3}$ $(25.34)^{\dagger}$\\
Qiu--Stein--Zhong (2015)~\cite{Qiu2015} & R & $8+\frac{16\sqrt{2}}{3}$ $(15.55)$ & ---\\
Khuller--Purohit (2016)~\cite{KhullerPurohit2016} & D & $8.00$ & $12.00$\\
Khuller--Purohit (2016)~\cite{KhullerPurohit2016} & R & $3+2\sqrt{2}$ $(5.83)$ & ---\\
Shafiee--Ghaderi (2017)~\cite{ShafieeGhaderi2017} & R & $2e$ $(5.44)$ &  $3e$ $(8.16)$\\
Ahmadi et al.\ (2017)~\cite{Ahmadi2017} & D & $4.00$ & $5.00$\\
Rohwedder--Schnaars ( 2025)~\cite{RS2025} & D & $140/41$ $(3.42)$ & $4.36$\\
\midrule
This paper & D & $58/21$ $\mathbf{(2.77)}$ &$1821/575$ $\mathbf{(3.17)}$\\
\bottomrule
\end{tabular}
\par\smallskip
\begin{minipage}{\linewidth}
$\dagger$ The release-date bound uses the correction of
Ahmadi et al.~\cite{Ahmadi2017} to the originally
claimed $67/3$ guarantee.
\end{minipage}
\end{table}

\paragraph{Iterated rounding} Our rounding procedure improves over the scheme of Rohwedder and Schnaars \cite{RS2025}, in which they used the result of Beck and Fiala~\cite{BeckFiala1981} to bound the error in terms of column sparsity of the matrix. Our method builds on ideas from iterative rounding techniques (see Lau, Ravi, and Singh~\cite{LRS2011}). Our algorithm selects a subset of tight capacity constraints and walks orthogonal to them to round the fractional solution. To perform this walk, we need to show that at every step the number of fractional variables is more than the number of selected rows. By doing a careful summation of capacity and assignment constraints we show that the number of fractional variables is at \textit{least} the number of selected rows. Then, we show that equality cannot take place, because it would imply linear dependence between the subset of capacity constraints that we have chosen and assignment constraints. Our intricate analysis yields two improvements. $(1)$ The rounding error for capacity constraints is bounded by $1$ instead of $2$ as in \cite{RS2025}. $(2)$ For one designated block, we can completely eliminate the error for capacity constraints in that block. Each block is a segment of timeslots, and by protecting one of the first blocks from the error, we greatly improve the approximation factor.


\paragraph{Release-aware scheduling through stable matchings}
For release dates, we use list-coloring algorithms after running our iterated rounding. Our idea to use list-coloring comes from the fact that for each edge $e$, we have a list of timeslots $L_e = \{r_e + 1, r_e + 2, \ldots, r_e + d_e\}$ in which it can be scheduled ($r_e$ is a release date of edge $e$, $d_e$ is how many slots we allow for this edge), and the list-coloring permits to assign an edge while preserving release and capacity constraints. The list-coloring method that we use is based on Galvin's method~\cite{Galvin1995}, which is implemented through
Gale--Shapley stable matchings~\cite{GaleShapley1962}. For the beautiful connection between stable matchings and list coloring,
see its application to the Dinitz problem in \emph{Proofs from THE
BOOK}~\cite[Chapter~38]{AignerZiegler2018} and the extensions of Galvin's
method by Borodin, Kostochka, and Woodall~\cite{BKW1997}. Moreover, we introduce an extension of greedy algorithm that achieves better guarantees than the standard greedy algorithm. Our extended algorithm uses a novel idea that we can reverse the priority of edges to balance out demand in a prefix of timeslots. This idea is of independent interest as it can be applied to other scheduling problems. Combined, our iterated rounding and the use of Galvin's method improve approximation factor for the setting with release dates.



\newpage
\subsection{Organization}

Section~\ref{sec:preliminaries} introduces the notation and recalls the required
background. Section~\ref{sec:rounding} introduces fractional schedules and
proves the iterated rounding theorem. Section~\ref{sec:no-release}
constructs the schedules without release dates, establishes their
completion-time bounds, and proves Theorem~\ref{thm:main-no-release}.
Section~\ref{sec:release} develops the release-date construction and its
polynomial implementation before proving Theorem~\ref{thm:main-release}.

\section{Definitions and Known Results}\label{sec:preliminaries}

We introduce the basic notation, formulate the Coflow Scheduling
problem, and recall the graph-theoretic results used in our proofs.

\subsection{Basic graph notions}
 
A \emph{multigraph} is a triple $G=(V,E,\partial)$, where $V$ is
the vertex set, $E$ is the edge set, and
$\partial:E\to\binom{V}{2}$ assigns to each edge its unordered pair
of endpoints. Distinct edges may have the same endpoints; such
edges are called \emph{parallel}. If $\partial$ is injective,
$G$ is a \emph{simple graph}.

A multigraph is \emph{bipartite} if its vertex set admits a partition
$V=V_1\sqcup V_2$ such that every edge has one endpoint in $V_1$ and
the other in $V_2$.

For each vertex $v\in V$, let
$\delta(v)=\{e\in E:v\in\partial(e)\}$ denote the set of edges
\emph{incident} with $v$. For $F\subseteq E$, define
\[
 \deg_F(v)=|F\cap\delta(v)|,
 \qquad
 \Delta(F)=\max_{v\in V}\deg_F(v).
\]
We also write $\Delta(G)=\Delta(E)$ for the maximum degree of $G$.
A set $M\subseteq E$ is a \emph{matching} if $\Delta(M)\le1$.

\subsection{Coflow scheduling problem}

In this section, we introduce the coflow scheduling problem. An instance consists of $n$ bipartite multigraphs
$G_j=(V,E_j,\partial_j)$ on a common bipartition $V=V_1\sqcup V_2$,
together with positive weights $\omega_j$ and release dates
$r_j\in\mathbb{Z}_{\ge0}$, where $j\in[n]:=\{1,\ldots,n\}$.
Each $G_j$ is called a \emph{coflow}, which we identify with its
edge set $E_j$. We assume that all coflows are nonempty.

Let
\[
 E=\bigsqcup_{j\in[n]}E_j,
 \qquad G=(V,E,\partial),
\]
where $\partial|_{E_j}=\partial_j$. All edge copies are treated as
distinct, including parallel copies within the same coflow or in
different coflows. For $e\in E_j$, write $j(e)=j$ and $r_e=r_j$. Later, we will introduce coflow deadlines $C_1, \ldots, C_n$. The deadlines on coflows induce the same deadlines on edges $C_e=C_{j(e)}$.
Sometimes we abuse the notation and write $e = \{u, v\}$ instead of $\partial(e) = \{u, v\}$.

Each edge requires one time slot of processing, and slot $t$
corresponds to the interval $(t-1,t]$. A \emph{feasible schedule}
is a map $\sigma:E\to\mathbb{Z}_{\ge1}$ such that
$\sigma^{-1}(t)$ is a matching for every slot $t$ and $ \sigma(e)\ge r_{j(e)}+1$ for each $e \in E$.
Thus each vertex participates in at most one scheduled edge per
slot, and no edge is processed before its coflow's release date.

The \emph{completion time} of coflow $E_j$ and the \emph{cost} of
a schedule $\sigma$ are
\[
 C_j^*(\sigma)=\max_{e\in E_j}\sigma(e),
 \qquad
 \cost(\sigma)=\sum_{j\in[n]}\omega_j C_j^*(\sigma).
\]

\begin{definition}[Coflow Scheduling with Release Dates]
Given coflows $G_1,\ldots,G_n$, positive weights
$\omega_1,\ldots,\omega_n$, and nonnegative integer release dates
$r_1,\ldots,r_n$, find a feasible schedule $\sigma$ of minimum $\mathrm{cost}(\sigma)$.
\end{definition}

We denote the minimum cost by $\OPT$. The special case $r_j=0$
for every $j\in[n]$ is called \emph{Coflow Scheduling without
Release Dates}.

To formally illustrate the problem, let us reformulate Coflow Scheduling with Release Dates as an integer
linear program. Note that we will not use this program in our proofs. Let
\[
 R=\max_{j\in[n]}r_j,
 \qquad T=R+|E|.
\]
We may restrict attention to schedules that finish by $T$, because each edge can be scheduled in its own slot.

For each $e\in E$ and $t\in[T]$, introduce a binary variable
$x_{e,t}$ indicating whether edge $e$ is scheduled in slot $t$.
For each coflow $E_j$, introduce an integer variable $C_j$ bounding
its completion time from above. The formulation is
\begingroup
\renewcommand{\theequation}{ILP}
\begin{subequations}\label{eq:coflow-ilp}
\renewcommand{\theequation}{ILP.\arabic{equation}}
\begin{align}
 \min\quad
 &\sum_{j\in[n]}\omega_j C_j
 &&\label{eq:ILP-objective}\\
 \text{subject to}\quad
 &\sum_{t\in[T]}x_{e,t}=1
 &&\forall e\in E,
 \label{eq:ILP-assignment}\\
 &\sum_{e\in\delta(v)}x_{e,t}\le1
 &&\forall v\in V,\ \forall t\in[T],
 \label{eq:ILP-matching}\\
 &x_{e,t}=0
 &&\forall e\in E,\ \forall t\in[T]:t \leq r_e,
 \label{eq:ILP-release}\\
 &\sum_{t\in[T]}t\,x_{e,t}\le C_{j(e)}
 &&\forall e\in E,
 \label{eq:ILP-completion}\\
 &x_{e,t}\in\{0,1\}
 &&\forall e\in E,\ \forall t\in[T],
 \label{eq:ILP-binary}\\
 &C_j\in\{r_j+1,\ldots,T\}
 &&\forall j\in[n].
 \label{eq:ILP-integer}
\end{align}
\end{subequations}
\endgroup

Constraints~\eqref{eq:ILP-assignment} schedule every edge exactly
once, and \eqref{eq:ILP-matching} ensure that the edges scheduled
in each slot form a matching.
Constraints~\eqref{eq:ILP-release} forbid processing before the
coflow's release date, while \eqref{eq:ILP-completion} require
$C_j$ to be at least the slot assigned to every edge of $E_j$.
Constraints~\eqref{eq:ILP-binary} and~\eqref{eq:ILP-integer}
specify the variable domains.

Every feasible solution defines a feasible schedule by setting
$\sigma(e)=t$ whenever $x_{e,t}=1$. Since all weights are positive,
every optimal solution satisfies
\[
 C_j=\max_{e\in E_j}\sum_{t\in[T]}t\,x_{e,t}
     =C_j^*(\sigma)
 \qquad(j\in[n]).
\]

When all release dates are zero, $T=|E|$ and
\eqref{eq:ILP-release} is vacuous, recovering the formulation
without release dates. 

\subsection{Known graph results}

In this section, we provide necessary background definitions and results about graphs and matchings that we will use in our algorithms. We say an edge coloring is \textit{proper} if two different edges sharing a vertex have different colors. Consider timeslots $1, 2, 3, \ldots$ as different colors. In this view, a proper edge coloring partitions the edges into matchings and
therefore gives a schedule when release constraints are absent (color corresponds to timeslot in which the edge is scheduled).
We use the following classical theorem.

\begin{lemma}[{K\H{o}nig~\cite{Konig1916}}]
\label{lem:konig}
Every bipartite multigraph $F$ admits a proper edge coloring with
$\Delta(F)$ colors, and such a coloring can be found in polynomial time.
Equivalently, its edges can be scheduled in $\Delta(F)$ slots.
\end{lemma}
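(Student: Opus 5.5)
We prove König's theorem, Lemma~\ref{lem:konig}, by induction on the number of edges. Each step adds one edge to a partial proper coloring with colors $\{1,\ldots,\Delta\}$, where $\Delta=\Delta(F)$, and repairs conflicts by swapping colors along an alternating path. Because $F$ may have parallel edges, we reason about edge identities rather than endpoint pairs. Nothing in the argument uses simplicity, only bipartiteness.

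\textbf{Setup.} Suppose all edges except $e$ are properly colored, where $\partial(e)=\{u,v\}$ with $u\in V_1$ and $v\in V_2$. Before $e$ is added, $u$ and $v$ each have degree at most $\Delta-1$ in the colored part. So some color $\alpha$ is missing at $u$ and some color $\beta$ is missing at $v$.

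\textbf{Easy case.} If $\alpha$ is also missing at $v$, color $e$ with $\alpha$ and we are done.

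\textbf{Swap case.} Otherwise $\alpha$ appears at $v$, and $\alpha\neq\beta$. Consider the component $P$ containing $v$ of the subgraph formed by edges colored $\alpha$ or $\beta$. Every vertex has at most one edge of each color there, so $P$ is a path or an even cycle. Since $\beta$ is missing at $v$, $P$ is a path starting at $v$ with an $\alpha$-edge, and its colors alternate $\alpha,\beta,\alpha,\dots$.

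\textbf{Main obstacle: $P$ avoids $u$.} This is the one step where bipartiteness is used. Suppose $P$ reached $u$.
\begin{itemize}
\item Along $P$, vertices at even distance from $v$ lie in $V_2$ and vertices at odd distance lie in $V_1$.
\item So $u$ would be reached at odd distance, and the last edge of $P$ into $u$ would carry color $\alpha$.
\item This contradicts $\alpha$ being missing at $u$.
\end{itemize}
Also, $u$ cannot be an interior vertex of $P$, since interior vertices have both colors. Hence $u\notin P$.

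\textbf{Completing the step.} Swap $\alpha$ and $\beta$ on $P$. Every vertex of $P$ sees the same set of colors as before, except the endpoints of $P$. At $v$, color $\alpha$ becomes free. The other endpoint is not $u$, and there the swap creates no conflict, because the swapped-in color was missing there (that is why the path ended). After the swap, $\alpha$ is free at both $u$ and $v$, so we color $e$ with $\alpha$. The result is a proper coloring of one more edge.

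\textbf{Running time and schedule.} Each step finds $\alpha$ and $\beta$ and traces $P$ in $O(|V|+|F|)$ time. There are $|F|$ steps, so the total time is polynomial. The color classes are matchings, and assigning color $c$ to slot $c$ gives a schedule of $F$ in $\Delta(F)$ slots.
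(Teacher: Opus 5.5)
Your proof is correct. It is the standard alternating-path (Kempe chain) argument, and it handles parallel edges properly: the $\{\alpha,\beta\}$-component through $v$ contains a vertex of degree one, so it is a simple path. The paper gives no proof of this lemma and simply cites K\H{o}nig's theorem as classical, so your argument supplies what the citation stands for. Your separate remark excluding $u$ as an interior vertex of $P$ is redundant, since the parity argument already rules out every position of $u$ on $P$.
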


Our algorithm for instances with release dates uses stable matchings
and the Gale--Shapley theorem. We recall the relevant definitions
and theorem below.
For each vertex $v\in V$, fix a strict total order $\succ_v$ on
its incident edges $\delta(v)$, where $e\succ_v f$ means that $v$
prefers $e$ to $f$. A matching $F\subseteq E$ is \emph{stable} if, for every edge
$e\in E\setminus F$, there exist an endpoint $v$ of $e$
and an edge $f\in F\cap\delta(v)$ such that $f\succ_v e$.
\begin{theorem}[Gale--Shapley \cite{GaleShapley1962}]
\label{thm:gale-shapley}
Every finite bipartite multigraph equipped with strict total
orders on the incident edges at each vertex admits a stable
matching. Such a matching can be found in polynomial time.
\end{theorem}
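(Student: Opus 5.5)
We prove Theorem~\ref{thm:gale-shapley} with the deferred-acceptance procedure, adapted to multigraphs by letting the two endpoints of an edge play the roles of a proposer and an acceptor. Vertices of $V_1$ propose and vertices of $V_2$ accept.

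Each $u\in V_1$ keeps a list of the edges of $\delta(u)$ that have not yet been rejected, ordered by $\succ_u$. Each $w\in V_2$ holds at most one tentatively accepted edge. The procedure runs in rounds. In each round, every $u\in V_1$ that holds no accepted edge and still has a nonempty list proposes along its $\succ_u$-best remaining edge $e$. Each $w\in V_2$ then considers its currently held edge together with all edges proposed to it in this round. It keeps the $\succ_w$-maximum of these, which is well defined because $\succ_w$ is a strict total order on $\delta(w)$, and rejects the others. A rejected edge is deleted from its proposer's list. The procedure stops when no $u\in V_1$ can propose.

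\textbf{Termination and running time.} Every round except the last rejects at least one edge, and a rejected edge is never proposed again. Hence there are at most $|E|+1$ rounds, each taking polynomial time.

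\textbf{Monotonicity invariant.} Once $w$ holds an edge, it holds an edge in every later round. Moreover, the edge it holds only improves with respect to $\succ_w$.

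\textbf{The output $F$ is a matching.} Each $w$ holds at most one edge. Each $u$ has at most one outstanding accepted edge, since it proposes only when it holds none, and a proposal is either accepted or rejected within the same round.

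\textbf{Stability.} Take any $e\in E\setminus F$ with $\partial(e)=\{u,w\}$, where $u\in V_1$ and $w\in V_2$. We consider two cases.
\begin{itemize}
\item Suppose $u$ never proposed along $e$. When the procedure stops, $u$ either holds an edge $f$, or its list is empty. An empty list is impossible here, since $e$ was never proposed and so never rejected. So $u$ holds some $f$, and $u$ proposes in decreasing $\succ_u$ order, so $f\succ_u e$. Thus $f\in F\cap\delta(u)$ blocks $e$.
\item Suppose $u$ proposed along $e$. Since $e\notin F$, the edge $e$ was rejected at some point, either immediately or after being held and later displaced. At that moment $w$ held an edge $g$ with $g\succ_w e$. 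By the monotonicity invariant, the final edge $f$ held by $w$ satisfies $f\succeq_w g\succ_w e$, and $f\in F$.
\end{itemize}

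Parallel edges cause no difficulty. They are distinct elements of $\delta(u)$ and $\delta(w)$, so the strict orders separate them and the argument treats them individually. The only delicate step is the monotonicity invariant used in the second case. It follows directly from the rule that $w$ always keeps the $\succ_w$-maximum of its held edge and the new proposals.
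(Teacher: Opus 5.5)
Your deferred-acceptance proof is correct: the termination bound of $|E|+1$ rounds, the matching property, and both stability cases hold, and ranking edges rather than partners handles parallel edges properly. The paper gives no proof of its own and simply cites Gale and Shapley, and your argument is exactly their proposal--rejection algorithm, so it takes essentially the same route.
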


\section{Fractional Scheduling and Iterated Rounding}\label{sec:rounding}

In this section, we prove our main iterated rounding result. 
Given an instance of Coflow Scheduling without release dates, with
coflows $E_1,\ldots,E_n$ and weights $\omega_1,\ldots,\omega_n$, we first
use Lemma~\ref{lem:deadlines} to compute fractionally feasible deadlines
$C_1,\ldots,C_n$. We then enlarge these deadlines to obtain integer-length
blocks and apply iterated rounding to assign each edge to one allowed
block. The rounding guarantees will also be used for instances with
release dates.

\subsection{Fractional Schedule}\label{sec:deadlines-overview}
Let $C_1,\ldots,C_n$ be deadlines for coflows $E_1,\ldots,E_n$,
respectively. We assume that the coflows are relabeled so that
$C_1\le\cdots\le C_n$, and set $C_0=0$.
We say that the deadlines are \emph{fractionally feasible} if the
following linear program admits a solution:
\begingroup
\renewcommand{\theequation}{LP I}
\begin{subequations}\label{lp:I}
\renewcommand{\theequation}{LP I.\arabic{equation}}
\begin{align}
 \sum_{s\in[n]}x_{s,e}&=1
       &&\forall e\in E,
       \label{eq:lp-I-assignment}\\
 \sum_{e\in\delta(v)}x_{s,e}&\le C_s-C_{s-1}
       &&\forall s\in[n],\ \forall v\in V,
       \label{eq:lp-I-capacity}\\
 x_{s,e}&=0
       &&\forall j\in[n],\ \forall e\in E_j,\ \forall s>j,
       \label{eq:lp-I-deadline}\\
 x_{s,e}&\ge0
       &&\forall s\in[n],\ \forall e\in E.
       \label{eq:lp-I-nonnegative}
\end{align}
\end{subequations}
\endgroup
Here, the timeslots are partitioned into blocks $(0, C_1] \sqcup (C_1, C_2] \sqcup \ldots \sqcup (C_{n-1}, C_n]$, and $x_{s,e}$ is the fraction of edge $e$ assigned to the block
$(C_{s-1},C_s]$. Constraints~\eqref{eq:lp-I-assignment} ensure that every edge of each coflow is fully scheduled, and
\eqref{eq:lp-I-capacity} ensure that the fractional degree of each vertex in every block does not exceed the block size $C_{s}-C_{s-1}$. Constraints~\eqref{eq:lp-I-deadline} forbid edge scheduling after the coflow's deadline.

To construct fractionally feasible deadlines, we use the result of \cite[Section~4]{Im2019}. We use its deterministic,
real-deadline form described by Rohwedder and
Schnaars~\cite[Lemma~5 and Section~3.1]{RS2025}.

\begin{lemma}[{Im--Moseley--Pruhs--Purohit~\cite{Im2019}}]
\label{lem:deadlines}
In deterministic polynomial time one can compute deadlines
$1\le C_1\le\cdots\le C_n$, after relabeling the coflows, for which
\eqref{lp:I} is feasible and
\begin{equation}\label{eq:deadline-potential}
 \sum_{j\in[n]}\omega_j(C_j+1)\le2\OPT.
\end{equation}
\end{lemma}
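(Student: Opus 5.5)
This lemma comes from Im--Moseley--Pruhs--Purohit, and the plan is to reproduce its deterministic form: solve a time-indexed LP relaxation, read off fractional completion times, and order the coflows by them. The relaxation is the LP version of \eqref{eq:coflow-ilp} with release dates zero, but with the per-edge completion constraint replaced by a per-coflow cumulative one. Let $y_{e,t}\ge0$ be the fraction of edge $e$ processed in slot $t$, and let $z_{j,t}\in[0,1]$ be the fraction of coflow $j$ that is completed by slot $t$. The constraints are:
\begin{itemize}
\item $\sum_t y_{e,t}=1$ for every edge $e$;
\item the matching constraints $\sum_{e\in\delta(v)}y_{e,t}\le1$;
\item $z_{j,t}\le\sum_{t'\le t}y_{e,t'}$ for every $e\in E_j$.
\end{itemize}
The objective is $\sum_j\omega_j\sum_t(1-z_{j,t})$, i.e.\ $\omega_j$ times the area above the completion curve of coflow $j$. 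The optimal integral schedule is feasible with $z_{j,t}=\ind[t\ge C_j^*]$, which gives LP value at most $\sum_j\omega_j C_j^*=\OPT$. The horizon $T=|E|$ is polynomial, so the LP can be solved in polynomial time.

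Next, define for each coflow the half-point $\tau_j=\min\{t: z_{j,t}\ge 1/2\}$. A Markov-type argument gives $\tau_j\le 2\sum_t(1-z_{j,t})$, since $1-z_{j,t}>1/2$ for all $t<\tau_j$. To recover the real-deadline form with the ``$+1$'' in \eqref{eq:deadline-potential}, I would instead use the continuous integral $\int_0^\infty(1-z_j(t))\,dt$ with $z$ piecewise constant; this absorbs the extra slot. That is, I would aim for
\[
\sum_j\omega_j(C_j+1)\le 2\sum_j\omega_j\sum_t(1-z_{j,t})\le2\OPT,
\]
with $C_j$ equal to $2\times$ the fractional completion time minus a slack term, as in \cite[Lemma~5]{RS2025}. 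I would then relabel the coflows by nondecreasing $C_j$.

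The main obstacle is proving that \eqref{lp:I} is feasible for these deadlines. The idea is to scale: by slot $\tau_j$ every edge of $E_j$ has at least half its mass, so the vector $2y$ restricted to slots up to $\tau_j$ covers each edge of $E_j$. The problem is that the blocks $(C_{s-1},C_s]$ have real lengths and each edge must go entirely to blocks $s\le j$. I would build $x_{s,e}$ greedily. Take the doubled-and-stretched time density $\tilde y_{e}(t)=y_e(t/2)$, and let $x_{s,e}$ be the mass of $\tilde y_e$ falling in $(C_{s-1},C_s]$, capped cumulatively so that the total mass of $e$ is exactly $1$ and is reached by $C_{j(e)}$. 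Stretching time by a factor $2$ halves the rate, so the degree at each vertex in any interval of length $L$ is at most $L$; this gives \eqref{eq:lp-I-capacity}. Doubling the mass ensures that half of each edge, placed before $\tau_j$, becomes a full unit before $2\tau_j\approx C_j$; this gives \eqref{eq:lp-I-assignment} and \eqref{eq:lp-I-deadline}. Truncating mass only decreases the loads. The delicate part is handling discrete slots versus continuous time so that the $+1$ is accounted for; I would follow the real-valued construction of \cite[Section~3.1]{RS2025} there.
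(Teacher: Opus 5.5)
The paper gives no proof of this lemma; it imports it from Im et al.\ in the derandomized real-deadline form of Rohwedder--Schnaars. Your reconstruction has a genuine gap at the point where the cost bound and the feasibility argument are combined.

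Your doubling-and-stretching construction certifies \eqref{lp:I} only for deadlines $C_j\ge 2\tau_j$, where $\tau_j$ is the $\tfrac12$-point. Your Markov step gives at best $\tau_j\le 2M_j+O(1)$, where $M_j=\sum_t(1-z_{j,t})$ is the fractional completion time. Together these give only $C_j\lesssim 4M_j$, which is the classical factor-$4$ bound. Markov is tight here: if $z_{j,\cdot}$ stays just below $\tfrac12$ up to slot $\tau$ and then jumps to $1$, then $M_j\approx\tau/2$ while $2\tau_j\approx 2\tau$. Asserting both ``$C_j$ equals $2\times$ the fractional completion time'' and ``$2\tau_j\approx C_j$'' silently assumes $\tau_j\approx M_j$, which the LP does not guarantee. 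Deadlines of order $2M_j$ are simply not covered by your feasibility argument.

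The missing idea is to average over the threshold instead of fixing it at $\tfrac12$.
\begin{itemize}
\item Spread each slot's LP fractions uniformly over $(t-1,t]$. Let $Z_j(s)$ be the minimum over $e\in E_j$ of the processed fraction of $e$ by time $s$.
\item Fix one $\alpha\in(0,1]$ \emph{common to all coflows}, let $\tau_j(\alpha)=\min\{s:Z_j(s)\ge\alpha\}$, and set $C_j=\tau_j(\alpha)/\alpha$. Your stretching argument with factor $1/\alpha$ in place of $2$ then gives feasibility of \eqref{lp:I} for all coflows simultaneously. It also gives $C_j\ge1$, since each edge is processed at rate at most $1$.
\item If $\alpha$ has density $2\alpha$ on $(0,1]$, then $\mathbb{E}[\tau_j(\alpha)/\alpha]=2\int_0^1\tau_j(\alpha)\,d\alpha=2\int_0^\infty(1-Z_j(s))\,ds$. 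This identity, not Markov, is what yields the factor $2$.
\item For the $+1$: $Z_j$ dominates the linear interpolation of $t\mapsto z_{j,t}$ (with $z_{j,0}=0$), so the integral is at most $\tfrac12+\sum_{t\ge1}(1-z_{j,t})$. The optimal schedule, viewed as an LP solution, has $\sum_{t\ge1}(1-z_{j,t})=C_j^*-1$, so the LP value is at most $\OPT-\sum_j\omega_j$, not merely $\OPT$. Together these give $\mathbb{E}[C_j]+1\le 2C_j^*$. A piecewise-constant $z$, as you proposed, would lose this $+1$.
\item Derandomize by evaluating $\sum_j\omega_j\tau_j(\alpha)/\alpha$ at the polynomially many breakpoints of the functions $\tau_j$. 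Between consecutive breakpoints the expression has the form $A/\alpha+B$, so some breakpoint is at least as good as the expectation.
\end{itemize}
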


The deadlines provided by Lemma~\ref{lem:deadlines} need not be integer. We will round them upward. The following lemma shows that this preserves fractional feasibility.

\begin{lemma}[Deadline monotonicity]\label{lem:deadline-monotonicity}
If \eqref{lp:I} is feasible for $C_1\le\cdots\le C_n$, then it remains
feasible after replacing $C_j$ by $\overline C_j$ for every $j$,
provided that $\overline C_1\le\cdots\le\overline C_n$ and
$\overline C_j\ge C_j$ for all $j$.
\end{lemma}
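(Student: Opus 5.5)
Given a feasible solution $x$ of \eqref{lp:I} for deadlines $C_1\le\cdots\le C_n$, I will build a feasible solution $\bar x$ for $\overline C_1\le\cdots\le\overline C_n$ by a transport argument in continuous time. The idea is to view a block assignment as a "fractional time profile" and then to reassign that profile to the new blocks. Concretely, for each edge $e$ define a density on $(0,C_n]$ by spreading $x_{s,e}$ uniformly over the old block: $\rho_e(\tau)=x_{s,e}/(C_s-C_{s-1})$ for $\tau\in(C_{s-1},C_s]$. Blocks with $C_s=C_{s-1}$ have capacity $0$, which forces $x_{s,e}=0$ there, so nothing is lost. Then set
\[
 \bar x_{s,e}=\int_{\overline C_{s-1}}^{\overline C_s}\rho_e(\tau)\,d\tau ,
\]
with $\overline C_0=0$.

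Checking \eqref{eq:lp-I-assignment} is immediate. Since $\overline C_n\ge C_n$, the new blocks cover $(0,C_n]$, so $\sum_s\bar x_{s,e}=\int_0^{C_n}\rho_e=1$. Nonnegativity is clear.

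For \eqref{eq:lp-I-capacity}, the key observation is that at every time $\tau$ and vertex $v$ we have $\sum_{e\in\delta(v)}\rho_e(\tau)\le1$. This follows by dividing the old capacity constraint of the block containing $\tau$ by its length. Integrating this pointwise bound over $(\overline C_{s-1},\overline C_s]$ gives $\sum_{e\in\delta(v)}\bar x_{s,e}\le\overline C_s-\overline C_{s-1}$.

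For \eqref{eq:lp-I-deadline}, take $e\in E_j$. Then $\rho_e$ vanishes after $C_j$, because $x_{s,e}=0$ for $s>j$. For $s>j$ the interval $(\overline C_{s-1},\overline C_s]$ lies beyond $\overline C_j\ge C_j$, so $\bar x_{s,e}=0$. This step is the only place where both hypotheses are used: monotonicity ensures that $\overline C_{s-1}\ge\overline C_j$, and $\overline C_j\ge C_j$ ensures the support of $\rho_e$ ends in time.

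No step seems hard. The only subtlety is handling zero-length blocks and ties in the relabeling, which the density formulation absorbs automatically.
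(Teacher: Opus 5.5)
Your proposal is correct and takes the same approach as the paper. The paper also processes each edge at constant rate $x_{s,e}/(C_s-C_{s-1})$ on every positive-length old block, defines $\overline x_{s,e}$ as the processing falling in $(\overline C_{s-1},\overline C_s]$, and verifies the assignment, capacity, and deadline constraints exactly as you do, using $\overline C_{s-1}\ge\overline C_j\ge C_j$ for the last one.
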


\begin{proof}
Let $x$ be feasible for the original deadlines $C_j$. Let us smear edges over the timeline. More precisely, on each positive-length
block $(C_{s-1},C_s]$, process edge $e$ at the constant rate
$x_{s,e}/(C_s-C_{s-1})$, with no processing outside $(0,C_n]$. The constraints of \eqref{lp:I}
ensure one unit of processing per edge, total rate at most one at each
vertex at any given time, and no processing of $E_j$ after $C_j$.

Set $\overline C_0=0$, and let $\overline x_{s,e}$ be the amount of this processing in
$(\overline C_{s-1},\overline C_s]$. Since $\overline C_n\ge C_n$,
these variables are nonnegative and satisfy
\[
 \sum_{s\in[n]}\overline x_{s,e}=1,
 \qquad
 \sum_{e\in\delta(v)}\overline x_{s,e}
 \le\overline C_s-\overline C_{s-1}.
\]
For $e\in E_j$ and $s>j$, we have
$\overline C_{s-1}\ge\overline C_j\ge C_j$, so
$\overline x_{s,e}=0$. Hence $\overline x$ satisfies \eqref{lp:I}
with the enlarged deadlines.
\end{proof}

\subsection{The Iterated Rounding Procedure}\label{sec:iterated-rounding}

We now round a feasible solution of \eqref{lp:I} for the enlarged
deadlines $\overline C_j$. The procedure preserves the assignment and
deadline constraints and allows an additive violation of at most one
in each capacity constraint. In any designated block $p$, it preserves
all capacities exactly. Thus the load at each vertex is at most
$\overline C_s-\overline C_{s-1}+1$ in every block $s\ne p$, and at most
$\overline C_p-\overline C_{p-1}$ in block $p$.
For the algorithm without release dates, we choose $p=1$.

We state the rounding result in a general form. Let $\calB$ be a finite
set of blocks, and let each edge $e\in E$ have a nonempty subset
$\calB_e\subseteq\calB$ of blocks in which it can be scheduled.
A set $\mathcal{C}\subseteq V\times\calB$ specifies the vertex--block
pairs with nonnegative integer capacities $c_{v,b}$.
One designated block $p\in\calB$ is called the \emph{protected block};
all other blocks are \emph{ordinary}.
For \eqref{lp:I} with the enlarged deadlines, we take $\calB=[n]$,
$\calB_e=\{1,\ldots,j(e)\}$, $\mathcal{C}=V\times[n]$, and
$c_{v,b}=\overline C_b-\overline C_{b-1}$.

\begin{theorem}[{Iterated rounding}]\label{lem:rounding}

Let $G=(V_1\mathbin{\sqcup}V_2,E)$ be a bipartite
multigraph, and let $\mathcal{B}$ denote a set of blocks. Let $\calB_e\subseteq\calB$ be nonempty
for every $e\in E$, and let
$\mathcal C\subseteq V\times\calB$ be a subset that indexes constraints (\ref{eq:ir-capacity}), with
$c_{v,b}\in\mathbb Z_{\ge0}$ for $(v,b)\in\mathcal C$.
Given a fractional solution $z_{b,e} \in [0, 1]$ of the system

\begin{align}
 \sum_{b\in\calB_e}z_{b,e}&=1 &&(e\in E),\label{eq:ir-assign}\\
 \sum_{e\in\delta(v):\,b\in\calB_e}z_{b,e}&\le c_{v,b}
       &&((v, b) \in \mathcal{C}),\label{eq:ir-capacity}
\end{align}

and any
$p\in\calB$, one can compute in polynomial time
$x_{b,e}\in\{0,1\}$ satisfying
\begin{align}
\sum_{b\in\calB_e}x_{b,e}=1 \quad(e\in E),
\qquad
\sum_{e\in\delta(v):\,b\in\calB_e}x_{b,e}
\le c_{v,b}+\mathbf1_{\{b\ne p\}}
\quad((v,b)\in\mathcal C).\label{eq:ir-guarantee}
\end{align}
\end{theorem}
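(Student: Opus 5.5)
The plan is to prove Theorem~\ref{lem:rounding} by a standard iterative-relaxation loop, with a token-counting argument to guarantee progress. We maintain a current fractional vector $z$, a set $E'$ of edges that are still fractional, and a set $\mathcal{C}'\subseteq\mathcal C$ of capacity rows still being enforced. Residual capacities are $c'_{v,b}=c_{v,b}-(\text{number of edges already fixed to } b \text{ at } v)$, which are integers. In each round we compute an extreme point of the polytope given by \eqref{eq:ir-assign} for $e\in E'$, the rows \eqref{eq:ir-capacity} in $\mathcal{C}'$ with capacities $c'$, and $0\le z\le 1$. We then perform the following updates:
\begin{itemize}
\item delete every variable with $z_{b,e}=0$;
\item fix every variable with $z_{b,e}=1$, removing $e$ from $E'$ and decrementing the relevant residual capacities;
\item drop an ordinary row $(v,b)\in\mathcal C'$ with $b\ne p$ as soon as the number $k_{v,b}$ of remaining fractional variables in it satisfies $k_{v,b}\le c'_{v,b}+1$.
\end{itemize}
Rows of the protected block $p$ are never dropped. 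Correctness of the guarantee \eqref{eq:ir-guarantee} is then immediate. A dropped ordinary row can gain at most $k_{v,b}\le c'_{v,b}+1$ further units, so its final load is at most $c_{v,b}+1$. Protected rows, and ordinary rows never dropped, are satisfied exactly by the final integral point. Polynomial running time follows because each round either fixes or deletes a variable or drops a row.

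The heart of the proof is showing that the loop never gets stuck. Suppose an extreme point has all variables fractional and no droppable row. Then every remaining edge has at least two fractional variables, every ordinary row in $\mathcal C'$ has $k_{v,b}\ge c'_{v,b}+2$, and the number $N$ of variables equals the rank of a family of linearly independent tight rows. Such a family consists of edge rows together with a set $\mathcal T$ of tight capacity rows. I will use the following token scheme, in which each variable $z_{b,e}$ holds one token:
\begin{itemize}
\item it gives $z_{b,e}$ to the edge row of $e$;
\item it gives $(1-z_{b,e})/2$ to each of the capacity rows $(u,b)$ and $(w,b)$, where $\partial(e)=\{u,w\}$ (bipartiteness means exactly one endpoint lies on each side).
\end{itemize}
Each edge row then receives exactly $1$. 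A tight row $(v,b)$ receives $(k_{v,b}-c'_{v,b})/2$. For ordinary rows this is at least $1$. For protected rows it is at least $1/2$, and equals $1$ once $k_{v,b}\ge c'_{v,b}+2$. Tokens handed to rows not in $\mathcal T$, or to non-tight rows, are surplus.

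The main obstacle is the protected block, where a tight row with $k=c'+1$ collects only $1/2$. To handle it I will use the bipartite structure of block $p$. Each fractional variable $z_{p,e}$ lies in exactly one protected row on the $V_1$ side and one on the $V_2$ side. I will therefore pair deficits across the two sides: every protected row with deficit $1/2$ must be compensated by surplus $1/2$ on a row of the other side. This surplus comes either from a row that is non-tight or absent from $\mathcal T$, or from a row with $k\ge c'+3$. If no such surplus exists, then every fractional variable of block $p$ lies in tight protected rows on both sides. In that case $\sum_{v\in V_1}(\text{row }(v,p))-\sum_{v\in V_2}(\text{row }(v,p))=0$ on the support, which is a linear dependence contradicting independence of $\mathcal T$. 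After this repair the count gives $N\ge|E'|+|\mathcal T|$.

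To upgrade this to strict inequality, I will analyse the equality case. Equality forces every token to be absorbed exactly: every variable's endpoints' rows all lie in $\mathcal T$ and are tight, and every ordinary row has $k=c'+2$. Under these conditions, summing all tight capacity rows on the $V_1$ side and subtracting those on the $V_2$ side expresses $0$ as a nontrivial combination of rows of $\mathcal T$, again contradicting linear independence. Hence $N>|E'|+|\mathcal T|$, so a fractional extreme point cannot exist without a droppable row, and the loop always makes progress.
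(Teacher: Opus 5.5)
Most of your setup is sound. Recomputing an extreme point each round instead of sliding along a kernel vector is an equivalent implementation. Never dropping protected rows is harmless, since a tight row whose $k\ge1$ variables all lie in $(0,1)$ automatically has $k\ge c'+1$. Your token scheme is exactly the paper's per-vertex count: the tokens delivered to the rows at a vertex $v$ total $\tfrac12(|\mathcal F^v|-|E^v|)$, where $\mathcal F^v$ denotes the remaining variables at $v$ and $E^v\subseteq E'$ the remaining edges at $v$. Without a protected block your argument is complete.

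The gap is the repair for tight protected rows with $k=c'+1$. Your dependence argument (protected rows on the two sides summing to the same vector) shows only that \emph{some} block-$p$ variable has an endpoint row outside $\mathcal T$. That yields a surplus of $(1-z_{p,e})/2$, which is arbitrarily small when $z_{p,e}$ is close to $1$ and the row is non-tight (its load need not be integral). It also yields one surplus source rather than one per deficit row. Nothing you wrote excludes, say, many deficit rows on the $V_1$ side whose variables all meet $V_2$-side rows of $\mathcal T$ with $k=c'+2$, while the only block-$p$ surplus is a single variable near $1$ elsewhere. The equality analysis inherits the problem. Equality in the repaired count does not force every variable to be covered by $\mathcal T$ at both endpoints, because deficits may be exactly offset by surplus from uncovered variables.

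The missing idea is to pay for each deficit at its own vertex, using all blocks at that vertex together with integrality. For an active vertex $v$, let $\mathcal F^v_0$ be the variables at $v$ that lie in no row of $\mathcal T$ at $v$.
\begin{itemize}
\item This set is nonempty: otherwise the rows of $\mathcal T$ at $v$ would sum to the same vector as the edge rows of $E^v$, contradicting independence.
\item Its total value, $|E^v|-\sum_{(v,b)\in\mathcal T}c'_{v,b}$, is an integer, so $|\mathcal F^v_0|$ exceeds it by at least $1$.
\end{itemize}
Thus the uncovered variables at $v$ deliver at least $1/2$ surplus at $v$, exactly covering the possible deficit of $(v,p)$. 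These variables typically lie in \emph{ordinary} blocks, so any argument confined to block $p$ cannot see them. Hence every vertex receives at least as many tokens as it has rows in $\mathcal T$, strictly if $v$ is active and $(v,p)\notin\mathcal T$. Your $V_1$-minus-$V_2$ dependence, applied to the protected rows alone, then disposes of the remaining case, in which every active vertex has $(v,p)\in\mathcal T$. This is exactly how the paper's proof is organized: your cross-side dependence belongs in the strictness step, not the compensation step.
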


\begin{proof}
   We refer to constraints (\ref{eq:ir-assign}) as assignment constraints and to constraints (\ref{eq:ir-capacity}) as capacity constraints. Initialize $x_{b, e} \leftarrow z_{b, e}$. Let $\mathcal{F}$ denote the set of current variables $(x_{b, e})$, and let $\mathcal{F}^{(v, b)} = \{x_{b, e} \in \mathcal{F} ~|~ e \in E, e \ni v \}$. For $\mathcal A\subseteq\mathcal F$, write
$S(\mathcal A)=\sum_{x_{b,e}\in\mathcal A}x_{b,e}$. Note that with this notation a capacity constraint $(v, b) \in \mathcal{C}$ can be written as $S(\mathcal{F}^{(v, b)}) \leq c_{v, b}$.

   The algorithm repeatedly applies one of the following three rules.
We show that at least one rule is available whenever a coordinate remains.

    \begin{enumerate}
        \item Look for variables $x_{b, e}$ that are already rounded $x_{b, e} \in \{0, 1\}$. We drop such variables from our system $\mathcal{F} \leftarrow \mathcal{F} \setminus x_{b, e}$, simplifying it. We also remove $b$ from $\calB_e$. If $x_{b, e} = 1$, we also decrease the corresponding integers $c_{v, b}$ by $1$ and remove $E \leftarrow E \setminus \{e\}$ since the edge $e$ is assigned to block $b$. Therefore, w.l.o.g. we may assume that $x_{b, e} \in (0, 1)$ for every variable in our system.

        \item Look for capacity constraints $S(\mathcal{F}^{(v, b)}) \le c_{v,b}$ ($(v, b) \in \mathcal{C}$), such that $|\mathcal{F}^{(v, b)}| \leq c_{v, b} + 1$ if $p \neq b$ and $|\mathcal{F}^{(v, b)}| \leq c_{v, b}$ if $p = b$. In this case we can drop this constraint from consideration ($\mathcal{C} \leftarrow \mathcal{C} \setminus (v, b)$) since any rounding of the remaining variables will violate it only by $1$ if $p \neq b$ and will not violate it at all if $p = b$. 

        \item Consider the set $\mathcal{C}^{\textit{tight}} \subseteq \mathcal{C}$ of capacity constraints that are tight: $S(\mathcal{F}^{(v, b)}) = c_{v,b}$. Let $\mathcal{C}' \subseteq \mathcal{C}^{\textit{tight}}$ be a subset of tight constraints that together with all hard constraints $\sum_{b\in\calB_e}x_{b,e} = 1$ forms a linearly independent basis. Here we view
        a constraint as a vector from $\{0, 1\}^{\mathcal{F}}$ ($1$ if a corresponding variable is in a constraint and $0$ otherwise). If $|\mathcal{F}| > |E| + \mathcal{C}'$, then there is a non-zero vector $y \in \mathbb{R}^{\mathcal{F}}$ that is orthogonal to all constraints from our basis: $\sum_{e\in\delta(v):\,b\in\calB_e}y_{b,e} = 0$ and $\sum_{b \in \calB_e} y_{b, e} = 0$. We slide along this vector until some variable gets rounded or some other constraint becomes tight: $x \leftarrow x + ty$. 

    \end{enumerate}
    See Algorithm (\ref{alg:iterated-rounding}) for the pseudocode. Note that rules $1$ and $2$ reduce the number of variables or constraints at least by $1$. The $3$rd rule either increases the number of rounded variables or the number of tight constraints. Therefore, our algorithm makes at most polynomially many steps before terminating (it terminates when $|\mathcal{F}| = 0$). It remains to show that at least one of three rules is always applicable. The rest of the proof is devoted to this.
    
    \paragraph{\textbf{Warm-up: $+1$ error for all constraints.} } Firstly, suppose that we do not need to protect block $p$ (e.g. $p = -1$). Let us show that in this case at least one rule can be applied (we also assume that there is at least one variable in our system). Towards contradiction, assume that no rule is applicable. Rules $2$ and $3$ imply that for any $(v, b) \in \mathcal{C}'$, we have $|\mathcal{F}^{(v, b)}| \geq c_{v,b} + 2$ since $|\mathcal{F}^{(v, b)}|$ and $c_{v, b}$ are integers. Since $(v, b) \in \mathcal{C}^{\textit{tight}}$, we also have $S(\mathcal{F}^{(v, b)}) = c_{v,b}$. Substitution gives

  \begin{equation} \label{eq:b1}
    |\mathcal{F}^{(v, b)}| \geq S(\mathcal{F}^{(v, b)}) + 2.
\end{equation}
Note that if we sum assignment constraints $\sum_{b \in \calB_e} x_{b, e} = 1$ over all $e \in E$, we get that $S(\mathcal{F}) = |E|$. Our goal is to show that $|\mathcal{F}| > |E| + |\mathcal{C}'|$, which would produce the desired contradiction since the rule $3$ can be applied. Note that (\ref{eq:b1}) states that "the number of variables in the constraint is at least their sum $+2$", while we also know that the sum of all variables is $|E|$. Let us sum inequalities (\ref{eq:b1}) over all $(v, b) \in \mathcal{C}'$. For $t \in \{0, 1, 2\}$, denote by $\mathcal{F}_t \subseteq \mathcal{F}$ variables that are summed $t$ times ($\mathcal{F}_0 \sqcup \mathcal{F}_1 \sqcup \mathcal{F}_2 = \mathcal{F}$ holds since every edge is incident to at most two vertices). With this notation, the summation gives

\begin{equation} \label{eq:b2}
|\mathcal{F}_1| + 2|\mathcal{F}_2|  \geq S(\mathcal{F}_1) + 2S(\mathcal{F}_2) + 2|\mathcal{C}'|.
\end{equation}

Let us show that $|\mathcal{F}_0| + |\mathcal{F}_1| > 0$. If $|\mathcal{F}_0| = |\mathcal{F}_1| = 0$, then $\mathcal{C}' = \mathcal{C}$ since only by including every constraint in the basis can we make each variable appear twice in the sum. However, in this case capacity constraints $\mathcal{C}$ and assignment constraints $E$ are not linearly independent. The sum of the first ones equals twice the sum of second ones, implying that $\mathcal{C}'$ cannot equal $\mathcal{C}$. Thus, $|\mathcal{F}_0| + |\mathcal{F}_1| > 0$. Since every variable is strictly between $0$ and $1$, we also get that $2|\mathcal{F}_0| + |\mathcal{F}_1| > 2S(\mathcal{F}_0) + S(\mathcal{F}_1)$.

Adding this inequality to (\ref{eq:b2}) we obtain $|\mathcal{F}| > S(\mathcal{F})  +  |\mathcal{C}'|$. Substituting $S(\mathcal{F}) = |E|$, implies $|\mathcal{F}| > |E| + |\mathcal{C}'|$ as desired by the $3$rd rule.

\paragraph{\textbf{Protecting block $p$.}} Now, let us modify the proof above to handle the case $p \in \mathcal{B}$. Towards contradiction, assume that no rule is applicable. Due to the $2$nd rule, for any $(v, b) \in \mathcal{C}'$ we have
 \begin{equation} \label{eq:b3}
    |\mathcal{F}^{(v, b)}| \geq S(\mathcal{F}^{(v, b)}) + 2 - \ind_{\{b=p\}}.
\end{equation} 

We are going to do a more careful summation than we did in the warm-up. For this, for a vertex $v \in V$, define $\mathcal{F}^v = \{ x_{b, e} \in \mathcal{F} ~|~ e \ni v\}$ and $\mathcal{C}'_v = \{(v, b) \in \mathcal{C'} ~|~ b \in \calB\}$ to be the set of variables and constraints adjacent to vertex $v$. We show the following claim.

\begin{claim}
    For any vertex $v \in V$, $|\mathcal{F}^v| \geq S(\mathcal{F}^v) + 2 |\mathcal{C}'_v|$. Moreover, for at least one $v \in V$ the inequality is strict (assuming $|\mathcal{F}| > 0$).
\end{claim}

First, let us show why this inequality implies contradiction. Summing the inequality over $v \in V$ every variable gets summed exactly twice since it is adjacent to exactly two vertices. Therefore, we have $2 |\mathcal{F}| > 2S(\mathcal{F}) + 2|\mathcal{C}'|$, implying $|\mathcal{F}| > |E| + |\mathcal{C}'|$, contradicting the assumption that Rule~3 is unavailable..

It remains to prove the claim. Fix vertex $v \in V$, and sum inequalities (\ref{eq:b3}) over $(v, b) \in \mathcal{C}'_v$. Denoting by $\mathcal{F}^v_t \subseteq \mathcal{F}^v$ the subset of variables that appears $t$ times in the sum, we get

\begin{equation} \label{eq:b4}
|\mathcal{F}^v_1| \geq S(\mathcal{F}_1^v) + 2|\mathcal{C}'_v| - I_{\{(v, p) \in \mathcal{C}'_v\}}.
\end{equation}
Assume $|\mathcal{F}^v| > 0$ (otherwise the first part of the claim is trivial). Let us establish that $|\mathcal{F}_0^v| > 0$ and that $S(\mathcal{F}_0^v)$ is integer, which would imply that $|\mathcal{F}_0^v| \geq S(\mathcal{F}_0^v) + 1$. Adding this inequality to (\ref{eq:b4}) would prove the first part of the claim (note that $\mathcal{F}^v = \mathcal{F}^v_0 \sqcup \mathcal{F}^v_1$). If $|\mathcal{F}^v_0| = 0$, then $\mathcal{C}'_v$ must include all capacity constraints adjacent to vertex $v$. However, in this case constraints $\mathcal{C}'_v$ are linearly dependent with assignment constraints of edges $E^v$ (edges adjacent to $v$), contradicting definition of $\mathcal{C}'$ (see Figure~\ref{fig:vlindep} for illustration). Therefore, $|\mathcal{F}^v_0| > 0$. By summing assignment constraints of edges $E^v$, we get $S(\mathcal{F}_0^v) + S(\mathcal{F}_1^v) = |E^v| \in \mathbb{Z}$. Note that $S(\mathcal{F}_1^v) = \sum_{(v, b) \in \mathcal{C}'_v} c_{v, b} \in \mathbb{Z}$, since $c_{v, b} \in \mathbb{Z}$, implying that $S(\mathcal{F}_0^v)$ is integer as well. Thus, the first part of the claim is established.

Let us establish the second part of the claim. Note that if for at least one vertex $v \in V$, such that $|\mathcal{F}^v| > 0$, we have $(v, p) \notin \mathcal{C}'_v$, then $|\mathcal{F}^v| \geq S(\mathcal{F}^v) + 2|\mathcal{C}'_v| + 1$. Towards contradiction, assume that for any $v \in V$, such that $|\mathcal{F}^v| > 0$, we have $(v, p) \in \mathcal{C}'_v$. Then, let us show that constraints from $\mathcal{C}'$ are not linearly independent. Indeed, the sets of constraints $\{ (v, p) ~|~ v \in V_1, |\mathcal{F}^v| > 0\}$  and $\{ (v, p) ~|~ v \in V_2, |\mathcal{F}^v| > 0\}$ touch every variable $x_{p, e}$ exactly once. Therefore, the sum of the coefficient vectors of the first ones equals the sum of the second ones, exhibiting linear dependence. Thus, there is at least one vertex $v$ for which the inequality is strict, finishing the proof of the claim.
\end{proof}

\begin{figure}[htbp]
    \centering

\begin{tikzpicture}[
    mynode/.style={draw, circle, thick, minimum size=0.55cm, inner sep=0pt},
    scale=0.79,
    transform shape
]

\def\drawnodes{
    \foreach \y in {0,1,2} {
        \node[mynode] (L\y) at (0,\y) {};
        \node[mynode] (R\y) at (2,\y) {};
    }
}


\begin{scope}[shift={(0,0)}]
    \drawnodes
    \node at (1, 3.2) {\large \textbf{Block 1}};
    \node[left] at (-0.35, 1) {\Large $v$};

    \draw[thick, blue] (L1) -- (R2) node[pos=0.4, above left=-2pt, color=black] {$x_{1,e}$};
    \draw[thick, blue] (L1) -- (R1) node[pos=0.7, above=1pt, color=black] {$x_{1,e'}$};
    \draw[thick, blue] (L1) -- (R0) node[pos=0.7, above right=-2pt, color=black] {$x_{1,e''}$};
\end{scope}

\begin{scope}[shift={(4.5,0)}]
    \drawnodes
    \node at (1, 3.2) {\large \textbf{Block 2}};

    \draw[thick, green!60!black] (L1) -- (R2) node[pos=0.6, below right=-2pt, color=black] {$x_{2,e}$};
    \draw[thick, green!60!black] (L1) -- (R0) node[pos=0.7, above right=-2pt, color=black] {$x_{2,e''}$};
\end{scope}

\begin{scope}[shift={(9,0)}]
    \drawnodes
    \node at (1, 3.2) {\large \textbf{Block 3}};

    \draw[thick, red] (L1) -- (R1) node[pos=0.5, above=1pt, color=black] {$x_{3,e'}$};
    \draw[thick, red] (L1) -- (R0) node[pos=0.7, above right=-2pt, color=black] {$x_{3,e''}$};
\end{scope}

\begin{scope}[shift={(5.5,-1.0)}]
    \draw[ultra thick] (-0.15, -0.45) -- (-0.15, 0.45);
    \draw[ultra thick]  (0.15, -0.45) --  (0.15, 0.45);
\end{scope}


\begin{scope}[shift={(0,-5.5)}]
    \drawnodes
    \node at (1, 3.2) {\large \textbf{Block 1}};
    \node[left] at (-0.35, 1) {\Large $v$};

    \draw[thick, blue] (L1) -- (R2) node[pos=0.4, above left=-2pt, color=black] {$x_{1,e}$};
    \draw[thick, green!60!black] (L1) -- (R1) node[pos=0.8, above=1pt, color=black] {$x_{1,e'}$};
    \draw[thick, red] (L1) -- (R0) node[pos=0.7, above right=-2pt, color=black] {$x_{1,e''}$};
\end{scope}

\begin{scope}[shift={(4.5,-5.5)}]
    \drawnodes
    \node at (1, 3.2) {\large \textbf{Block 2}};

    \draw[thick, blue] (L1) -- (R2) node[pos=0.6, below right=-2pt, color=black] {$x_{2,e}$};
    \draw[thick, red] (L1) -- (R0) node[pos=0.7, above right=-2pt, color=black] {$x_{2,e''}$};
\end{scope}

\begin{scope}[shift={(9,-5.5)}]
    \drawnodes
    \node at (1, 3.2) {\large \textbf{Block 3}};

    \draw[thick, green!60!black] (L1) -- (R1) node[pos=0.5, above=1pt, color=black] {$x_{3,e'}$};
    \draw[thick, red] (L1) -- (R0) node[pos=0.7, above right=-2pt, color=black] {$x_{3,e''}$};
\end{scope}

\end{tikzpicture}

\caption{Illustration of the linear dependence between the assignment constraints and the capacity constraints adjacent to vertex $v$. In the top row, we have capacity constraints $\textcolor{blue}{x_{1, e} + x_{1, e'} + x_{1, e''} = c_{v, 1}}$, $\textcolor{OliveGreen}{x_{2, e} + x_{2, e''} = c_{v, 2}}$, and $\textcolor{red}{x_{3, e'} + x_{3, e''} = c_{v, 3}}$. In the bottom row, we have assignment constraints $\textcolor{blue}{x_{1, e} + x_{2, e} = 1}$, $\textcolor{OliveGreen}{x_{1, e'} + x_{3, e'} = 1}$, $\textcolor{red}{x_{1, e''} + x_{2, e''} + x_{3, e''} = 1}$. Each variable occurs exactly once in capacity constraints and exactly once in assignment constraints, implying that the two families of constraints are linearly dependent.} 
\label{fig:vlindep}

\end{figure}

\begin{algorithm}[H]
\caption{Iterated rounding procedure}
\label{alg:iterated-rounding}
\small
\begin{algorithmic}[1]
\Require A fractional solution $z_{b, e} \in [0, 1]$ to the system~\eqref{eq:ir-assign}--\eqref{eq:ir-capacity}
with integer capacities and a designated block $p$. Assignment and capacity constraints are indexed by edges $E$ and a subset $\mathcal{C} \subseteq V \times \mathcal{B}$ of vertex-block pairs, respectively.
\Ensure An integer solution $x_{b, e} \in \{0, 1\}$ satisfying~\eqref{eq:ir-guarantee}.
\State Initialize $x_{b, e} \leftarrow z_{b, e}$
\State $\mathcal{F} \gets \{x_{b,e} \mid e \in E, b \in \calB_e\}$ \Comment{Set of current variables}
\While{$\mathcal{F} \neq \varnothing$}
  \If{there is $x_{b,e} \in \mathcal{F}$ with $x_{b,e} \in \{0, 1\}$}
    \State $\mathcal{F} \gets \mathcal{F} \setminus \{x_{b,e}\}$ \Comment{Rule 1: Drop rounded variables}
    \State $\calB_e \gets \calB_e \setminus \{b\}$
    \If{$x_{b,e} = 1$}
      \For{$v \in e$ with $(v, b) \in \mathcal{C}$}
        \State $c_{v,b} \gets c_{v,b} - 1$ 
      \EndFor
      \State $E \leftarrow E \setminus \{e\}$
    \EndIf
  \ElsIf{there is $(v,b) \in \mathcal{C}$ s.t. $|\mathcal{F}^{(v,b)}| \le c_{v,b} + 1$ (if $p \ne b$) or $|\mathcal{F}^{(v,b)}| \le c_{v,b}$ (if $p = b$)}
    \State $\mathcal{C} \gets \mathcal{C} \setminus \{(v,b)\}$ \Comment{Rule 2: Drop relaxed constraints}
  \Else
    \State  $\mathcal{C}^{\textit{tight}} \gets \{(v,b) \in \mathcal{C} \mid S(\mathcal{F}^{(v,b)}) = c_{v,b}\}$
    \State Define $a^{(e)} \in \{0, 1\}^{\mathcal{F}}$ as indicator of variables present in the assignment constraint $e \in E$
    \State Define $a^{(v, b)} \in \{0, 1\}^{\mathcal{F}}$ as indicator of variables present in the capacity constraint $(v, b) \in \mathcal{C}$
    \State Find $0 \not=y \in \mathbb{R}^{\mathcal{F}}$ s.t. $\langle a^{(e)}, y \rangle = 0$ for all $e \in E$ and $\langle a^{(v,b)}, y \rangle = 0$ for all $(v,b) \in \mathcal{C}^{\textit{tight}}$
    \State Choose $t \in \mathbb{R}$ such that $x' = x + ty \in [0, 1]^{\mathcal{F}}$ and $\langle a^{(v,b)}, x' \rangle \le c_{v,b}$ for all $(v,b) \in \mathcal{C}$, with the property that either $x'_{b,e} \in \{0, 1\}$ for some variable or $\langle a^{(v,b)}, x' \rangle = c_{v,b}$ for some $(v,b) \notin \mathcal{C}^{\textit{tight}}$
    \State $x \gets x + ty$ \Comment{Rule 3: Slide along kernel}
  \EndIf
\EndWhile
\end{algorithmic}
\end{algorithm}

\section{Coflow Scheduling without Release Dates}\label{sec:no-release}

Fix deadlines $C_1\le\cdots\le C_n$ for which \eqref{lp:I} is feasible.
We first construct schedules by iterated rounding and then establish
their approximation guarantee.

\subsection{Scheduling via Iterated Rounding}\label{sec:calendar-proof}

For any positive integers $\lambda,\tau$, we define a scheduling algorithm $\IRA^\tau_\lambda$, which takes as input deadlines $C_1,\ldots,C_n$ for which \eqref{lp:I} is feasible and constructs a valid coflow schedule in three steps:

\textbf{1. Deadline Rounding} 

The algorithm rounds each deadline $C_j$ up to the smallest element
of $\{\lambda+i\tau:i\in\mathbb{Z}_{\ge0}\}$ that is at least $C_j$.
Denoting the rounded deadline by $\overline C_j$, we have
\begin{equation}\label{eq:calendar}
 \overline C_j=\lambda+\tau\max\left\{0,
                  \ceil{\frac{C_j-\lambda}{\tau}}\right\}.
\end{equation}

\textbf{2. Block Formation and Iterated Rounding}

Rounded deadlines may coincide. Let $0 < T_1<\cdots<T_B$ be the distinct values of the rounded deadlines $\overline{C}_1, \dots, \overline{C}_n$. In other words, $\{T_1, \ldots, T_B\} = \{\overline{C}_1, \ldots, \overline{C}_n\}$. We set $T_0=0$, and form blocks $(T_{b-1},T_b]$ for $b\in [B]$.
Their lengths are integers, with $T_1\ge\lambda$ and
$T_b-T_{b-1}\ge\tau$ for $b\ge2$.
Since $\overline C_j\ge C_j$ for all $j$,
Lemma~\ref{lem:deadline-monotonicity} ensures that \eqref{lp:I} remains
feasible for the rounded deadlines.

    We apply Theorem~\ref{lem:rounding} with
\[
\mathcal B=[B],\qquad 
\mathcal B_e=\{b\in[B]:T_b\le \overline C_{j(e)}\},\qquad
\mathcal C=V\times[B],
\]
and
\[
c_{v,b}=T_b-T_{b-1}.
\]
Take a feasible solution of \eqref{lp:I} for the rounded deadlines,
discard its zero-length blocks, and index the remaining blocks by
$b\in[B]$. Let $z_{b,e}$ be the resulting assignment.
Then $z$ satisfies \eqref{eq:ir-assign}--\eqref{eq:ir-capacity}
with the parameters above. Applying Theorem~\ref{lem:rounding} with protected block \(p=1\) assigns every edge to one block \(b\in\mathcal B_e\), while violating each vertex--block capacity by at most \(1\) for \(b\ne1\) and not violating it for \(b=1\).

\textbf{3. Scheduling via K\H{o}nig's Theorem}

Finally, the algorithm applies K\H{o}nig's theorem
(Lemma~\ref{lem:konig}) to decompose the edges assigned to each block into matchings, forming a schedule for each block, and then concatenates these schedules for $b = 1, \dots, B$.

Write $\IRA^\tau_\lambda(C_j)$ for the finishing time of $E_j$ in this
schedule. We will show that $\IRA_{\lambda}^{\tau}(C_j) \leq P_{\lambda, \tau}(C_j)$ where 
\begin{equation}\label{eq:P-response}
 P_{\lambda,\tau}(x)
 =\lambda+(\tau+1)\max\left\{0,\ceil{\frac{x-\lambda}{\tau}}\right\},
 \qquad x\ge1.
\end{equation}

\begin{lemma}\label{prop:calendar-response}
For deadlines $C_1,\ldots,C_n$ for which \eqref{lp:I} is feasible and
positive integers $\lambda,\tau$, algorithm $\IRA^\tau_\lambda$ constructs
in polynomial time a valid coflow schedule satisfying
\[
 \IRA^\tau_\lambda(C_j)\le P_{\lambda,\tau}(C_j)
 \qquad(j\in[n]).
\]
\end{lemma}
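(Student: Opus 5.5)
We argue by following the three steps of $\IRA^\tau_\lambda$ and bounding the length of each block's sub-schedule. \textbf{Validity.} By Lemma~\ref{lem:deadline-monotonicity}, \eqref{lp:I} is feasible for the rounded deadlines $\overline C_j$. Restricting to positive-length blocks $(T_{b-1},T_b]$ gives a solution $z$ of \eqref{eq:ir-assign}--\eqref{eq:ir-capacity}, since a zero-length block forces $z=0$ there via capacity zero. The capacities $c_{v,b}=T_b-T_{b-1}$ are integers. Theorem~\ref{lem:rounding} with $p=1$ then assigns each edge $e$ to a single block $b\in\calB_e$, so $T_b\le\overline C_{j(e)}$. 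The resulting load satisfies $\deg_{F_b}(v)\le T_1$ for $b=1$ and $\deg_{F_b}(v)\le T_b-T_{b-1}+1$ for $b\ge2$, where $F_b$ is the set of edges assigned to block $b$.

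Lemma~\ref{lem:konig} colors $F_b$ with $\Delta(F_b)$ colors, i.e.\ schedules it in $L_b:=\Delta(F_b)$ consecutive slots. Concatenation gives a valid schedule, since each slot is a matching and there are no release constraints. All steps are polynomial: deadline rounding, one LP solve, Theorem~\ref{lem:rounding}, and K\H{o}nig's theorem.

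\textbf{Finishing-time bound.} Block $b$'s sub-schedule ends by
\[
\sum_{b'\le b}L_{b'}\le T_1+\sum_{b'=2}^{b}(T_{b'}-T_{b'-1}+1)=T_b+(b-1).
\]
An edge of $E_j$ lies in a block $b$ with $T_b\le\overline C_j$. Writing $\overline C_j=T_{b_j}$, we get $b\le b_j$, so
\[
\IRA^\tau_\lambda(C_j)\le\overline C_j+(b_j-1).
\]
It remains to show $b_j-1\le k_j:=\max\{0,\lceil (C_j-\lambda)/\tau\rceil\}$. Then $\overline C_j+k_j=\lambda+(\tau+1)k_j=P_{\lambda,\tau}(C_j)$ by \eqref{eq:calendar} and \eqref{eq:P-response}.

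Every $T_b$ lies in the grid $\{\lambda+i\tau: i\ge0\}$, and $T_1\ge\lambda$. Hence the distinct values $T_1<\dots<T_{b_j}=\lambda+k_j\tau$ are at most $k_j+1$ in number, which gives $b_j\le k_j+1$.

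The only delicate point is the protected first block. There the $+1$ error is absent, so the first block contributes exactly $T_1$ slots and not $T_1+1$. This is what makes the count come out to $b_j-1$ extra slots rather than $b_j$.
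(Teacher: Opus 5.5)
Your proposal is correct and follows essentially the same route as the paper's proof. The paper likewise takes the degree bounds $T_1$ and $T_b-T_{b-1}+1$ from the rounding with $p=1$, applies K\H{o}nig per block, and bounds the finishing time by $\overline C_j+q-1$. Your grid-counting step $b_j\le k_j+1$ is just a restatement of the paper's inequality $\overline C_j\ge\lambda+(q-1)\tau$, which gives $q-1\le(\overline C_j-\lambda)/\tau=k_j$.
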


\begin{proof}
By Theorem~\ref{lem:rounding}, the edge-to-block assignment obtained in Step~2 guarantees that the edges assigned to the first block have maximum degree at most $T_1$, while for each block $b\ge2$, the maximum degree is at most $T_b-T_{b-1}+1$.

By K\H{o}nig's theorem (Lemma~\ref{lem:konig}), the edges of each block graph can be partitioned into as many matchings as its maximum degree. This gives a valid schedule using at most
$T_1$ slots for the first block and $T_b-T_{b-1}+1$ slots for each
block $b\ge2$.

Fix $j$, and let $q$ satisfy $T_q=\overline{C}_j$.
Every edge of $E_j$ is scheduled within the first $q$ blocks, and
$\overline{C}_j\ge\lambda+(q-1)\tau$. Hence
\begin{align*}
 \IRA^\tau_\lambda(C_j)
 &\le \overline{C}_j+q-1 \le \overline{C}_j+\frac{\overline{C}_j-\lambda}{\tau}
 =P_{\lambda,\tau}(C_j).
\end{align*}
Since $B\le n$, the block LP has polynomial size. Solving it, applying
iterated rounding, and edge-coloring the block graphs all take
polynomial time.
\end{proof}

We use the deadline-preprocessing trick of Rohwedder and
Schnaars~\cite[Lemma~23]{RS2025full} to obtain a variant
$\KIRA^\tau_\lambda$ of $\IRA^\tau_\lambda$.
For the original fractionally feasible deadlines $C_1,\ldots,C_n$,
the edges $J=\{e\in E:C_{j(e)}<2\}$ form a matching.
Algorithm $\KIRA^\tau_\lambda$ schedules $J$ in slot one and removes
its coflows. It then applies $\IRA^\tau_\lambda$ to the remaining
coflows, if any, with their unchanged deadlines, and shifts the
resulting schedule by one slot. Write $\KIRA^\tau_\lambda(C_j)$ for the finishing time of $E_j$
in this schedule.
Lemma~\ref{prop:calendar-response} immediately yields
the following guarantee.

\begin{lemma}\label{prop:kira-bound}
For deadlines $C_1,\ldots,C_n$ for which \eqref{lp:I} is feasible and
positive integers $\lambda,\tau$, algorithm $\KIRA^\tau_\lambda$
constructs in polynomial time a valid coflow schedule satisfying
\begin{equation}\label{eq:K-response}
 \KIRA^\tau_\lambda(C_j)\le
 \begin{cases}
  1,&C_j<2,\\
  1+P_{\lambda,\tau}(C_j),&C_j\ge2,
 \end{cases}
 \qquad(j\in[n]).
\end{equation}
\end{lemma}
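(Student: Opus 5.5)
The plan is to reduce Lemma~\ref{prop:kira-bound} to Lemma~\ref{prop:calendar-response}. The only genuinely new ingredients are two facts: the edges in $J=\{e\in E:C_{j(e)}<2\}$ form a matching, and the LP restricted to the remaining coflows stays feasible with the same deadlines.

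\textbf{Step 1: $J$ is a matching.} Let $x$ be a feasible solution of \eqref{lp:I}. Every $e\in J$ belongs to a coflow $j$ with $C_j<2$. By \eqref{eq:lp-I-deadline}, $x$ places $e$ entirely in blocks $s\le j$, and all these blocks lie in $(0,C_j]\subseteq(0,2)$.

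Fix a vertex $v$ and let $j^*$ be the largest index with $C_{j^*}<2$. Summing \eqref{eq:lp-I-capacity} over $s\le j^*$, and using \eqref{eq:lp-I-assignment} together with the deadline constraints, gives
\[
\deg_J(v)\le\sum_{s\le j^*}\sum_{e\in\delta(v)}x_{s,e}\le C_{j^*}<2 .
\]
Since degrees are integers, $\deg_J(v)\le1$. Hence $J$ is a matching and can be scheduled in slot~$1$.

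\textbf{Step 2: feasibility for the remaining coflows.} The coflows with $C_j<2$ are exactly a prefix $1,\dots,j^*$ in the sorted order, and $J$ is the union of their edge sets. Remove them and keep the remaining deadlines $C_{j^*+1}\le\dots\le C_n$. We must check that \eqref{lp:I} is still feasible for these coflows.

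To do this, take the smeared (continuous-time) schedule from the proof of Lemma~\ref{lem:deadline-monotonicity}, restrict it to the remaining edges, and re-aggregate it over the new block partition $(0,C_{j^*+1}],(C_{j^*+1},C_{j^*+2}],\dots$. Removing edges only lowers loads, so the capacity constraints still hold. The deadline constraints hold because each remaining edge is still processed only before its own deadline. Feasibility is therefore preserved.

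\textbf{Step 3: apply the earlier bound.} If any coflows remain, apply Lemma~\ref{prop:calendar-response} to them. It produces a valid schedule in which each remaining $E_j$ finishes by $P_{\lambda,\tau}(C_j)$. Shifting this schedule by one slot places it in slots $\ge2$, so it never conflicts with slot~$1$, and each such coflow finishes by $1+P_{\lambda,\tau}(C_j)$. Each removed coflow has all its edges in $J$ and finishes at time~$1$.

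Running time is polynomial: computing $J$ is trivial, and the rest is one call to $\IRA^\tau_\lambda$.

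\textbf{Main obstacle.} The delicate point is Step~1. The inequality must be made rigorous when several blocks have zero length or when $C_1<1$. Lemma~\ref{lem:deadlines} guarantees $C_1\ge1$, so I would use the summed-capacity argument above, which handles all these cases uniformly.
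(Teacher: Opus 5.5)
Your proposal is correct and follows the paper's route. You schedule the matching $J$ in slot~$1$ and append the $\IRA^\tau_\lambda$ schedule of the residual coflows, shifted by one slot, so that Lemma~\ref{prop:calendar-response} gives the bound; along the way you make explicit the two facts the paper treats as immediate ($\deg_J(v)\le C_{j^*}<2$ from the summed capacity constraints, and feasibility of \eqref{lp:I} for the residual instance with a merged first block), and your closing appeal to $C_1\ge1$ is unnecessary because that summed-capacity bound holds for any feasible deadlines.
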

\subsection{Final Bound for Coflow Scheduling without Release Dates}\label{sec:main-no-release-proof}

Compute a single set of deadlines using Lemma~\ref{lem:deadlines}.
Run each of the six algorithms in Table~\ref{tab:no-release-six} on
these deadlines and return a schedule of minimum weighted completion
cost. We use the weights\footnote{The parameter choices in Tables \ref{tab:no-release-six} and \ref{tab:release-seven} were obtained by a computer search that enumerated candidate parameter combinations and compared the resulting approximation bounds. The guarantees for the selected combinations are proved analytically in Appendices A and B.} in the table to show that the performance of the best algorithm is bounded by the average performance.

\begin{table}[htbp]
\caption{The six algorithms (first row) and their weights
(second row).}
\label{tab:no-release-six}
\centering
\begin{tabular}{@{}*{6}{c}@{}}
\toprule
$\IRA^2_3$ & $\IRA^2_4$ & $\IRA^3_4$ & $\IRA^3_3$
 & $\KIRA^2_4$ & $\KIRA^3_5$\\
\midrule
$3/21$ & $2/21$ & $5/21$ & $5/21$ & $1/21$ & $5/21$\\
\bottomrule
\end{tabular}
\end{table}

\begin{proof}[Proof of Theorem~\ref{thm:main-no-release}]
Let $\sigma_1,\ldots,\sigma_6$ be the schedules produced by algorithms $\IRA$ and $\KIRA$ in table order. Let $a_1,\ldots,a_6$ be corresponding weights given by the table.
Recall that $C_j^*(\sigma_i)$ denotes the completion time of the $j$-th coflow in schedule $\sigma_i$.  If schedule $\sigma_i$ is chosen with probability $a_i$, the expected
completion time of coflow $E_j$ is
\[
 D_j=\sum_{i=1}^6 a_i C_j^*(\sigma_i).
\]

\begin{claim}\label{clm:no-release-mean}
    For every $j\in[n]$, $21D_j\le29(C_j+1)$.
\end{claim}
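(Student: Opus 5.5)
The plan is to bound each $C_j^*(\sigma_i)$ by the response functions from Lemma~\ref{prop:calendar-response} and Lemma~\ref{prop:kira-bound}, so that $D_j\le f(C_j)$ with
\[
 f(x)=\tfrac{3}{21}P_{3,2}(x)+\tfrac{2}{21}P_{4,2}(x)+\tfrac{5}{21}P_{4,3}(x)+\tfrac{5}{21}P_{3,3}(x)+\tfrac{1}{21}K_{4,2}(x)+\tfrac{5}{21}K_{5,3}(x).
\]
Here $K_{\lambda,\tau}(x)$ denotes the right-hand side of \eqref{eq:K-response}. It then suffices to prove the purely one-variable inequality $21f(x)\le 29(x+1)$ for every real $x\ge1$, since Lemma~\ref{lem:deadlines} gives $C_j\ge1$.

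\textbf{Step 1: reduce to integers.} Each $P_{\lambda,\tau}$ has integer $\lambda,\tau$, so it is constant on every interval $(k,k+1]$ with $k$ an integer. The same holds for each $K_{\lambda,\tau}$: its extra breakpoint is at $x=2$, and since $x<2$ versus $x\ge2$ splits $(1,2]$ only at the right endpoint, I will treat $x\in[1,2)$ separately. On the remaining intervals $(k,k+1]$, $f$ is constant with value $f(k+1)$, while $x+1>k+1$. Hence it suffices to check $21f(k+1)\le 29(k+1)$ for integers $k\ge2$ (with $C_j$ slightly above $k$), together with a direct check on $[1,2)$ and at $x=2$. On $[1,2)$, the $P$-terms are constant and equal to $\lambda$, and the $K$-terms equal $1$. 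This gives $21f=3\cdot3+2\cdot4+5\cdot4+5\cdot3+1+5=58\le 29\cdot 2$, which is exactly the value $58/21$ and is tight at $x=1$. I will check $x=2$ the same way.

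\textbf{Step 2: periodicity.} Once $x>5$, every term satisfies $P_{\lambda,\tau}(x+6)=P_{\lambda,\tau}(x)+6(\tau+1)/\tau$, because $6$ is a multiple of both $\tau=2$ and $\tau=3$; the $K$-terms behave the same way. Thus $21f(x+6)=21f(x)+6\big(6\cdot\tfrac32+15\cdot\tfrac43\big)=21f(x)+6\cdot29$. The weighted slope is exactly $29/21$, so $21f(x)-29(x+1)$ is $6$-periodic for $x>5$. It therefore suffices to verify the integer inequalities for $k+1$ ranging over $\{3,\dots,11\}$ or so, i.e.\ finitely many explicit evaluations of ceilings.

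\textbf{Main obstacle.} Because the asymptotic slope matches $29/21$ exactly, there is no slack to absorb errors. The finite check must therefore be done exactly with the ceilings rather than with any linear upper bound such as $\lceil y\rceil\le y+1-1/\tau$. I expect several residues modulo $6$ to be tight, as $x=1$ already is. The care needed is in bookkeeping the phase of each $\lceil (x-\lambda)/\tau\rceil$ for the four different $\lambda$ values and the two $\tau$ values, and in handling the off-by-one shift in the $K$ terms. I would tabulate $21f(m)$ for $m=2,\dots,11$ and compare each entry with $29m$.
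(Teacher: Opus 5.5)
Your proposal is correct and would go through. Two parts coincide with the paper's proof: the direct checks on $[1,2)$ and at $x=2$, and the reduction to integer points (constancy on $(k,k+1]$ with $x+1>k+1$, which is the paper's $m=\ceil{C_j}\le C_j+1$).

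The difference is in the regime $C_j>2$. You get the exact slope $29/21$ from a period-$6$ shift and then defer to a finite table for $m=3,\dots,11$. The paper instead notices how the weights are arranged:
\begin{itemize}
\item the $\tau=2$ offsets $\lambda\in\{3,4\}$ each carry total weight $3$, counting $\KIRA^2_4$ together with $\IRA^2_4$;
\item the $\tau=3$ offsets $\lambda\in\{3,4,5\}$ each carry weight $5$.
\end{itemize}
The identity $\sum_{h=0}^{t-1}\ceil{(z-h)/t}=\ceil{z}$ then collapses each group in closed form: $P_{3,2}+P_{4,2}=3m-2$ and $P_{3,3}+P_{4,3}+P_{5,3}=4m$. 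Hence $21D_j\le 29\ceil{C_j}$ in one line, with no table. The maxima with zero already vanish for $x>2$, so your threshold $x>5$ is safe but conservative.

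The paper's route also explains why there is no slack: $21f(m)=29m$ holds with equality for every integer $m\ge 3$, not just for several residues. Your route is more mechanical and would work for arbitrary parameter choices without spotting the complete-residue-system structure. As written, though, it is still a plan, and the table has to be produced; it does check out, e.g.\ $21f(3)=87$, $21f(4)=116$, $21f(5)=145$.

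One slip to fix. Your last sentence says to compare $21f(m)$ with $29m$ for $m=2,\dots,11$. At $x=2$ the correct comparison is with $29\cdot 3$, since $21f(2)=87>58$, and you had already set $x=2$ aside as a separate check. The table should start at $m=3$.
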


The claim follows from Lemmas~\ref{prop:calendar-response}
and~\ref{prop:kira-bound} and an explicit calculation, given in
Appendix~\ref{app:no-release-claim}.

Let $\sigma$ minimize $\cost$ among the six schedules.
Since $a_i\ge0$ and $\sum_{i=1}^6a_i=1$, the minimum cost is at
most the weighted average:
\begin{align*}
 \cost(\sigma)
 &=\min_{i\in[6]}\cost(\sigma_i)
 \le\sum_{i=1}^6 a_i\cost(\sigma_i)\\
 &=\sum_{j\in[n]}\omega_jD_j
 \le\frac{29}{21}\sum_{j\in[n]}\omega_j(C_j+1)
 \le\frac{58}{21}\OPT,
\end{align*}
where the last inequality is guaranteed by \eqref{eq:deadline-potential}.
The deadline construction and all six scheduling algorithms take polynomial time. 
\end{proof}

\section{Coflow Scheduling with Release Dates}\label{sec:release}

In this section we prove $3.17$ approximation for coflow scheduling with release dates. For this, we define three types of algorithms: $\SARA_H, \LARA_p^{\mathcal L}$, and $\KLARA_p^{\mathcal L}$. All three algorithms receive as an input a set of  deadlines. To compute it, we use a result of \cite{Im2019}, which is described in Section~\ref{sec:fdrd}. All algorithms will also use list coloring lemmas, which are stated in Section~\ref{sec:list-coloring}. Finally, we define algorithms $\SARA$ in Section~\ref{sec:sarah} and $\LARA$ together with $\KLARA$ in Section~\ref{sec:release-blocks}. In Section~\ref{sec:release-proof}, we combine all these algorithms to obtain the desired $3.17$ approximation.

\subsection{Fractional Schedule with Release Dates}
\label{sec:fdrd}
Let $C_1,\ldots,C_n$ be deadlines for coflows $E_1,\ldots,E_n$,
respectively. Set $t_0=0$, and let $t_1<\cdots<t_k$ be the
distinct positive release dates and deadlines, so that $k\le2n$.
We say that the deadlines are \emph{fractionally feasible} if the
following linear program admits a solution $x_{s, e} \in [0, 1]$:
\begingroup
\renewcommand{\theequation}{LP R}
\begin{subequations}\label{lp:R}
\renewcommand{\theequation}{LP R.\arabic{equation}}
\begin{align}
 \sum_{s\in[k]}x_{s,e}&=1
       &&\forall e\in E,
       \label{eq:lp-R-assignment}\\
 \sum_{e\in\delta(v)}x_{s,e}&\le t_s-t_{s-1}
       &&\forall s\in[k],\ \forall v\in V,
       \label{eq:lp-R-capacity}\\
 x_{s,e}&=0
       &&\forall e\in E,\ \forall s\in[k]:
         t_{s-1}<r_e\ \text{or}\ t_s>C_e
       \label{eq:lp-R-window}\\
 x_{s,e}&\ge0
       &&\forall s\in[k],\ \forall e\in E.
       \label{eq:lp-R-nonnegative}
\end{align}
\end{subequations}
\endgroup
Here $x_{s,e}$ is the fraction of the edge $e$ assigned to the block
$(t_{s-1},t_s]$. Constraints~\eqref{eq:lp-R-assignment} ensure that
every edge of each coflow is fully scheduled, and
\eqref{eq:lp-R-capacity} ensure that the fractional degree of each
vertex in every block does not exceed the block size $t_s-t_{s-1}$.
Constraints~\eqref{eq:lp-R-window} forbid edge scheduling before
the coflow's release date or after its deadline.

The deadline construction of Im, Moseley, Pruhs, and
Purohit~\cite{Im2019} also applies with release dates. We use its
deterministic, real-deadline form described by Rohwedder and
Schnaars~\cite[Section~5]{RS2025}.

\begin{lemma}[{Im--Moseley--Pruhs--Purohit~\cite{Im2019}}]
\label{lem:release-deadlines}
In deterministic polynomial time one can compute deadlines
$C_j\ge r_j+1$ for which \eqref{lp:R} is feasible and
\begin{equation}\label{eq:release-deadline-potential}
 \sum_{j\in[n]}\omega_j(C_j+1)\le2\OPT.
\end{equation}
\end{lemma}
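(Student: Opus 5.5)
The plan is to follow the same route as Lemma~\ref{lem:deadlines}, now with release dates kept inside the relaxation, and to argue the two parts separately: fractional feasibility of \eqref{lp:R}, and the potential bound \eqref{eq:release-deadline-potential}.

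\textbf{Step 1 (a relaxation that lower-bounds $\OPT$).} Take the time-indexed relaxation of \eqref{eq:coflow-ilp} over slots $t\in[T]$. It keeps \eqref{eq:ILP-assignment}, \eqref{eq:ILP-matching} and \eqref{eq:ILP-release} with fractional $x$. In place of the per-edge constraint \eqref{eq:ILP-completion}, it uses the Im et al.\ formulation with coflow-level variables $z_{j,t}$. Here $z_{j,t}$ is the fraction of coflow $j$ that is still incomplete at time $t$. It is monotone in $t$, equals $1$ for $t\le r_j$, and satisfies $\sum_{t'\ge t}x_{e,t'}\le z_{j(e),t}$. Every feasible schedule gives an integral solution with $\sum_t z_{j,t}=C_j^*$, so the LP optimum is at most $\OPT$. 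This LP is polynomial after the standard compression of the horizon to release dates plus geometric intervals. Alternatively, it can be taken directly in the real-deadline form of Rohwedder and Schnaars, Section~5, which is the version the statement cites.

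\textbf{Step 2 (deadlines by halving).} From an optimal LP solution, define $C_j$ as the first time at which $z_{j,\cdot}$ drops to at most $1/2$, suitably interpolated in the real-valued form. Markov's inequality on the fractional completion time then gives
\[
C_j+1\le 2\sum_t z_{j,t},
\]
and also $C_j\ge r_j+1$, because $z_{j,t}=1$ for $t\le r_j$. Multiplying by $\omega_j$ and summing over $j$ yields \eqref{eq:release-deadline-potential}.

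\textbf{Step 3 (fractional feasibility of \eqref{lp:R}).} By time $C_j$, every edge of $E_j$ has at least half of its mass processed, and none of that mass lies before $r_j$. Truncate each edge's processing to the window $(r_e,C_e]$ and scale it by $1/(\text{processed mass})\le 2$. This doubles the per-slot load, so slot capacities become $2$ rather than $1$. That is where the factor two must be absorbed, and it is the main obstacle. The Im et al.\ trick is to exploit the ordering of deadlines: the mass of coflow $j$ processed after time $C_j/2$, or more precisely after the half-point, is reassigned in a way that keeps the vertex loads per block within $t_s-t_{s-1}$. Equivalently, one verifies a Hall/flow-type cut condition for \eqref{lp:R}. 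For every vertex $v$ and every union of blocks, the mass that must lie in them is at most their total length.

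I would first try to prove \eqref{lp:R} feasibility through max-flow duality. The idea is to reduce to checking interval windows $[a,b]$ at each vertex, using the per-vertex interval structure of the windows $(r_e,C_e]$. That bipartite flow decomposes per edge across the two endpoints, which is the delicate part. If that fails, I would simply invoke the cited construction, whose deadline-feasibility argument carries over verbatim once release windows are included.
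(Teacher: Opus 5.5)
The paper gives no proof of this lemma. It imports it from Im et al.\ (Section~4), in the deterministic real-deadline form of Rohwedder and Schnaars, so your closing fallback is exactly what the paper does. The argument you sketch before that, however, cannot be completed, because the deadlines of Step~2 are in general \emph{not} fractionally feasible, and no reassignment in Step~3 can fix this. Take a single coflow of $k$ parallel copies of one edge $\{u,v\}$ with $r=0$. In every optimal LP solution the incomplete fraction is $1-(t-1)/k$, since the capacity at $u$ forces this. So the $1/2$-point is about $k/2$. But \eqref{lp:R} with a deadline $C<k$ is infeasible, since $u$ has total capacity $C$ in $(0,C]$. The factor-two overload you identified is genuine, not something to be absorbed.

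Your cut-condition plan fails for a separate reason. Per-vertex window counts do not imply feasibility of \eqref{lp:R}, because every edge consumes capacity at both endpoints in the same block, so \eqref{lp:R} is not a flow problem. For example, take coflow $1=\{ax,a'y\}$ with $C_1=1$ and coflow $2=\{wx,wy\}$ with $C_2=2$. Every vertex has at most $t$ incident edges with deadline at most $t$. Yet block $(0,1]$ is saturated at $x$ and $y$, which forces both edges at $w$ into $(1,2]$.

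The missing idea is to spend the factor two on stretching time rather than on lowering the threshold. Fix one $\alpha\in(0,1]$ for all coflows. Let $t_j(\alpha)$ be the $\alpha$-point of coflow $j$ in the LP, i.e., the first time by which every edge of $E_j$ has at least $\alpha$ processed, and set $C_j=t_j(\alpha)/\alpha$.

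\textbf{Feasibility.} Let $y(s)$ be the LP's continuous-time schedule, whose rate vector lies in the fractional matching polytope. The dilated schedule $y'(s)=y(\alpha s)$, truncated once an edge is complete and for $E_j$ after $C_j$, has three properties. It still has rate at most $1$ at every vertex. It processes each edge of $E_j$ at least $\frac1\alpha\cdot\alpha=1$ by $C_j$. It only delays work, so release dates are respected. Integrating it over the blocks gives a solution of \eqref{lp:R}. Moreover, $t_j(\alpha)\ge r_j+\alpha$ gives $C_j\ge r_j/\alpha+1\ge r_j+1$.

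\textbf{Cost.} Draw $\alpha$ with density $2\alpha$. Then $\mathbb{E}[C_j]=\int_0^1\frac{t_j(\alpha)}{\alpha}\,2\alpha\,d\alpha=2\int_0^1 t_j(\alpha)\,d\alpha$, which is twice the integral of the incomplete fraction of coflow $j$, i.e., twice its LP fractional completion time. Now view an optimal schedule as spreading each edge uniformly over its slot. In that solution the integral equals $C_j^*-\tfrac12$, so $\mathbb{E}\bigl[\sum_j\omega_j(C_j+1)\bigr]\le 2\OPT$. A minimizing $\alpha$ can be found among polynomially many breakpoints, which makes the construction deterministic. This is the real-deadline construction the paper cites.
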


Every valid schedule satisfies $C_j^*\ge r_j+1$, so
\begin{equation}\label{eq:release-lower-bound}
 \sum_{j\in[n]}\omega_j(r_j+1)\le\OPT.
\end{equation}




\subsection{List Coloring Lemmas}\label{sec:list-coloring}

In this section we describe list-coloring algorithms that we will later use as a subroutine in algorithms $\SARA_H$, $\LARA_p^{\mathcal L}$ and $\KLARA_p^{\mathcal L}$. A list edge coloring assigns each edge $e$ a color in a specified finite
set $L_e$, with incident edges receiving different colors. We use the
following rank-based formulation. Its proof follows the stable-matching
method of Galvin~\cite{Galvin1995} and the related list-coloring arguments
of Borodin, Kostochka, and Woodall~\cite{BKW1997}.

\begin{lemma}[{List Coloring}]\label{lem:galvin}
Let $G=(V,E)$ be a bipartite multigraph and let bijection $r_v:\delta_G(v)\to[\deg_G(v)]$ specify an ordering of edges incident to vertex $v$. Suppose every edge $e=\{ u, v\} \in E$ has a finite color list satisfying
\begin{equation}\label{eq:rank-list-size}
 |L_e|\ge r_u(e)+r_v(e)-1.
\end{equation}
Then $G$ admits a proper edge coloring with $\operatorname{col}(e)\in L_e$ for all $e\in E$. Such a coloring can be found in polynomial time.
\end{lemma}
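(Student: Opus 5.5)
We prove Lemma~\ref{lem:galvin} by Galvin's kernel method: induct on the number of edges, using Gale--Shapley (Theorem~\ref{thm:gale-shapley}) to produce, for a single color, a matching that can take that color.

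\textbf{Orientation and preferences.} For $u\in V_1$, let $u$ prefer edges with larger rank: $e\succ_u f$ iff $r_u(e)>r_u(f)$. For $v\in V_2$, let $v$ prefer edges with smaller rank: $e\succ_v f$ iff $r_v(e)<r_v(f)$. These are strict total orders because each $r_v$ is a bijection.

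\textbf{Induction step.} If every list is empty, the hypothesis $|L_e|\ge r_u(e)+r_v(e)-1\ge1$ forces $E=\varnothing$, and there is nothing to prove. Otherwise, pick a color $c$ that appears in some list and let $E_c=\{e:c\in L_e\}$. Apply Theorem~\ref{thm:gale-shapley} to the subgraph $(V,E_c)$ with the restricted orders, and obtain a stable matching $M\subseteq E_c$, which is nonempty since $E_c\neq\varnothing$. Color every edge of $M$ with $c$, delete $M$ from the graph, and delete $c$ from the lists of all remaining edges.

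Now re-rank the remaining graph $G'=G-M$ at each vertex, keeping the relative order of the surviving edges. Let $e=\{u,v\}$ be a surviving edge with $u\in V_1$, $v\in V_2$.
\begin{itemize}
\item If $c\notin L_e$, the list is unchanged. Ranks can only decrease, so the hypothesis still holds.
\item If $c\in L_e$, then $e\in E_c\setminus M$. By stability, some endpoint prefers its $M$-edge $f$ to $e$.
\begin{itemize}
\item If this endpoint is $u$, then $r_u(f)>r_u(e)$. Removing $f$ leaves $r_u(e)$ unchanged, but $f$ was counted among the edges ranked above $e$ at $u$. So, rather than tracking that, we need some rank of $e$ to drop by one.
\end{itemize}
\end{itemize}

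This is where the orientation must be chosen correctly. The quantity $r_u(e)+r_v(e)-1$ should be read as counting the edges $f\neq e$ with $f$ "dominating" $e$ at $u$ (lower rank at $u$) plus those dominating at $v$ (lower rank at $v$), plus one.

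So flip the preferences: $u\in V_1$ prefers smaller rank, and $v\in V_2$ prefers larger rank. Stability then gives an endpoint whose matched edge $f$ is preferred. If that endpoint is $u$, then $r_u(f)<r_u(e)$, so deleting $f$ lowers $r_u(e)$ by one. If it is $v$, then $r_v(f)>r_v(e)$, and deleting $f$ does \emph{not} lower $r_v(e)$.

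Galvin's standard fix is to reverse the ranking at $V_2$ itself. The point is that the hypothesis is symmetric in how ranks are read, so we should choose, at each vertex, the preference direction that makes "preferred" coincide with "lower rank". Concretely, take $e\succ_w f$ iff $r_w(e)<r_w(f)$ at \emph{every} vertex $w$. Then stability gives an endpoint $w$ and $f\in M\cap\delta(w)$ with $r_w(f)<r_w(e)$. Deleting $f$ lowers $r_w(e)$ by one, and the other rank does not increase. So the right-hand side drops by at least one, while $|L_e|$ drops by exactly one, and the hypothesis is preserved for $G'$. Induction on $|E|$ then finishes the proof.

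\textbf{Why bipartiteness and re-ranking are harmless.} Theorem~\ref{thm:gale-shapley} needs only bipartiteness and strict orders, both of which hold. Re-ranking after deletions keeps the ranks as bijections onto $[\deg_{G'}(w)]$, and no rank ever increases, so edges with $c\notin L_e$ are safe.

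\textbf{Running time.} Each round colors at least one edge, so there are at most $|E|$ Gale--Shapley calls, each polynomial.

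\textbf{Main obstacle.} The delicate point is the stability-to-rank-decrease argument: we must check that the blocking endpoint's matched edge $f$ has strictly smaller rank than $e$ at that endpoint, so that deleting $f$ strictly lowers $r_w(e)$. This is exactly what choosing "prefer smaller rank" at every vertex guarantees.
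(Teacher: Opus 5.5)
Your final argument is correct and is the paper's proof: at every vertex you prefer smaller rank, take a Gale--Shapley stable matching in $E_c$, and use its blocking endpoint to show that $r_u(e)+r_v(e)-1$ drops by at least one whenever $|L_e|$ drops by one. In the write-up, delete the two abandoned orientations (larger rank at $V_1$, and the flipped version), since only the uniform ``prefer smaller rank'' choice is used.
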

\begin{proof}
We argue by induction on the number of edges. The case $E = \varnothing$ is trivial. Fix some color $c$ that occurs in at least one list, and denote by $E_c \subseteq E$ the subset of edges that have this color in the list. For each vertex $v$, $r_v$ defines a preferences order: $e \succ_v e'$ iff $r_v(e) < r_v(e')$. We apply Theorem~\ref{thm:gale-shapley} to graph $(V, E_c)$ to find a stable matching $F \subseteq E_c$. Note that $F \neq \varnothing$ because $E_c \neq \varnothing$. We color each edge in $F$ by $c$ and consider the remaining set of edges $E \setminus F$. For $v \in V$ and $e \in \delta_G(v) \setminus F$, define $r'_v(e) = |\{ e' \in E \setminus F ~|~ r_v(e') < r_v(e), e' \in \delta_{G}(v)\}| + 1$ as a new rank function for the graph where we deleted edges $F$. 

\begin{claim}
    For any $e \in E \setminus F, e = \{u, v\}$, we have $|L_e \setminus \{c\}| \geq r'_u(e) + r_v'(e) - 1$.
\end{claim}
Note that if the claim is true, then we can just proceed by induction on the graph $(V, E \setminus F)$. Let us prove the claim. Consider an edge $e \in E \setminus F$. If $c \notin L_e$, then $L_e \setminus \{c\} = L_e$ and so the bound holds. If $c \in L_e$, then $e$ has not been chosen in the stable matching, implying that there is an edge $e' \in F$, such that either $r_u(e') < r_u(e)$ or $r_v(e') < r_v(e)$. Both cases imply $r'_u(e) + r'_v(e) < r_u(e) + r_v(e)$, proving the claim.

Finally, note that the algorithm works in polynomial time because stable matching can be computed in polynomial time and after each step we reduce the number of edges at least by $1$.
\end{proof}

We use the following consequence of Lemma~\ref{lem:galvin},
which is Galvin's list edge-coloring theorem. It also follows from
the stronger local bound in~\cite[Theorem~3]{BKW1997}.
\begin{corollary}[Galvin~\cite{Galvin1995}]
\label{cor:bounded-degree}
    Let $G=(V,E)$ be a bipartite multigraph with maximum degree $\Delta$. Suppose for each edge, there is a list $L_e$ of size at least $\Delta$. Then $G$ admits a proper edge coloring with $\operatorname{col}(e)\in L_e$ for all $e\in E$. Such a coloring can be found in polynomial time.
\end{corollary}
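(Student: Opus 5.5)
We derive Corollary~\ref{cor:bounded-degree} directly from Lemma~\ref{lem:galvin}. The whole task is to choose the rank functions $r_v$ so that $r_u(e)+r_v(e)-1\le\Delta$ for every edge $e=\{u,v\}$. The right-hand side of \eqref{eq:rank-list-size} is then at most $\Delta\le|L_e|$, and the lemma supplies the coloring in polynomial time.

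\textbf{Step 1: embed into a regular graph.} Following Galvin, we first embed $G$ into a $\Delta$-regular bipartite multigraph $G'$, which is possible when $G$ is bipartite. Pad the smaller side with isolated vertices so that $|V_1|=|V_2|$. Then repeatedly add an edge between some $u\in V_1$ and some $v\in V_2$ with $\deg(u)<\Delta$ and $\deg(v)<\Delta$; such a pair exists while the graph is not regular, because the two sides have equal total degree. Each new edge receives an arbitrary list of size $\Delta$ (say $\{1,\dots,\Delta\}$). Any proper list coloring of $G'$ restricts to one of $G$, so we may assume $G$ is $\Delta$-regular. This step takes polynomial time.

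\textbf{Step 2: choose the ranks from a proper coloring.} By K\H{o}nig's theorem (Lemma~\ref{lem:konig}), compute a proper edge coloring $\phi:E\to[\Delta]$. Since $G$ is $\Delta$-regular, every color appears exactly once at each vertex. For $u\in V_1$ set $r_u(e)=\phi(e)$, and for $v\in V_2$ set $r_v(e)=\Delta+1-\phi(e)$. Each of these maps is a bijection $\delta(\cdot)\to[\Delta]=[\deg(\cdot)]$, as Lemma~\ref{lem:galvin} requires. For every edge $e=\{u,v\}$ we then get
\[
r_u(e)+r_v(e)-1=\phi(e)+\Delta+1-\phi(e)-1=\Delta\le|L_e|,
\]
so Lemma~\ref{lem:galvin} applies.

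The only step needing care is the regularization. Without it, a vertex of degree below $\Delta$ has ranks in a smaller range, and the opposite-orientation trick would no longer make the two ranks sum exactly to $\Delta+1$. Adding edges keeps the graph bipartite, and the new edges need lists only for the application of the lemma, so they impose no restriction on the original edges.
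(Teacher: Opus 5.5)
Your proof is correct, but it reaches the rank condition by a different route than the paper.

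The paper does not regularize. It takes the K\H{o}nig partition $E=E_1\sqcup\cdots\sqcup E_\Delta$ of $G$ itself. It orders $\delta(u)$ by increasing matching index $p_e$ for $u\in V_1$ and by decreasing $p_e$ for $v\in V_2$, and defines the rank as the position in this order. Since the edges at a vertex have distinct indices, at most $p_e-1$ edges precede $e$ at $u$ and at most $\Delta-p_e$ precede it at $v$. This gives $\rho_u(e)+\rho_v(e)-1\le\Delta$ as an inequality rather than an equality.

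So the obstacle you flag, that a low-degree vertex has ranks in a smaller range, is harmless. Compressing colors to positions only decreases the ranks, and exact equality is never needed.

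Your padding step is valid. Equal side sizes and equal total degree guarantee a deficient vertex on each side, and only $O(|V|\Delta)$ dummy edges with dummy lists are added. It buys a cleaner verification, in which the ranks are literally the colors.

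The paper's version is shorter and works on the original graph with no auxiliary vertices or edges. It is also the form reused in Lemma~\ref{lem:getranks}. There, the K\H{o}nig-based order covers only the edges with $C_e<H+1$ and is followed by the remaining edges in a combined ordering at each vertex, so regularizing would be awkward.
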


\begin{proof}
    By Lemma~\ref{lem:konig}, we can partition edges of $G$ into $\Delta$ matchings: $E = E_1 \sqcup \ldots \sqcup E_{\Delta}$.  For an edge $e \in E$, let $p_e$ be the matching in which it lies $e \in E_{p_e}$. Let $V = V_1 \sqcup V_2$ be the parts of $G$. For a vertex $v \in V$, order incident edges $\delta_G(v)$ to it by an increasing value of $p_e$ if $v \in V_1$ and by decreasing value of $p_e$ if $v \in V_2$. Let $e_1, \ldots, e_k$ be the obtained order. Define ranks by setting $\rho_v(e_i) = i$. 

Let us show that the condition of Lemma~\ref{lem:galvin} is satisfied. Let $e \in E, e = \{u, v\}, u \in V_1, v \in V_2$ be an edge. At most $p_e - 1$ edges are before the edge $e$ in the order for vertex $u$. Similarly, at most $\Delta - p_e$ edges are before $e$ in the order for vertex $v$. Therefore, $\rho_u(e) \leq p_e$ and $\rho_v(e) \leq \Delta - p_e + 1$, implying that $\rho_u(e) + \rho_v(e) - 1 \leq \Delta$. Therefore, the desired coloring exists and can be found in polynomial time.
\end{proof}

\subsection{Scheduling via Rank Assignment}
\label{sec:sarah}

In this section we define the scheduling algorithm $\SARA_H$. This algorithm also receives as an input deadlines $C_1 \leq \ldots \leq C_n$, such that \ref{lp:R} is fractionally feasible for them, after relabeling coflows. First, the algorithm constructs ranks that satisfy the following guarantees.

\begin{lemma}
    \label{lem:getranks}
    Let $G=(V_1\sqcup V_2,E)$ be a bipartite multigraph. Suppose for each $e \in E$, there is a deadline $C_e \geq 0$, such that the following holds. For any $v \in V_1 \sqcup V_2$ and $t \geq 0$, there is at most $\lfloor t \rfloor$ edges $e \in E$, such that $v \in e$ and $C_e < t$. Then, for every integer $H \geq 1$, one can construct in polynomial time bijective ranks $\rho_v : \delta_G(v) \to [\deg_G(v)]$, such that for any edge $e = \{u, v\}$, it holds 

    \begin{equation}\label{eq:rank-bound}
 \rho_u(e) + \rho_v(e) - 1 \le
 \begin{cases}
  H,&C_e < H + 1,\\
  2\lfloor C_e \rfloor - 1 &C_e\ge H + 1.
 \end{cases}
\end{equation}
\end{lemma}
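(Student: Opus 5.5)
The plan is to split the edges at each vertex into \emph{early} edges, with $C_e<H+1$, and \emph{late} edges, with $C_e\ge H+1$. At every vertex the early edges get the first ranks and the late edges the remaining ranks. Early edges are ranked by the Galvin-type construction from the proof of Corollary~\ref{cor:bounded-degree}, and late edges simply by nondecreasing deadline. Since all early edges at $v$ precede all late ones, the rank of an early edge depends only on the early subgraph.

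\textbf{Step 1: the early subgraph has maximum degree at most $H$.} Let $E^{<}=\{e: C_e<H+1\}$ and fix $v$. Every $e\in E^{<}\cap\delta(v)$ satisfies $C_e<t$ for some $t\in(H,H+1)$, and $\lfloor t\rfloor=H$ for such $t$. Since $E^{<}\cap\delta(v)$ is finite, one $t\in(H,H+1)$ close enough to $H+1$ works for all of these edges simultaneously. The hypothesis then gives $\deg_{E^{<}}(v)\le H$.

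\textbf{Step 2: ranks of early edges.} By Lemma~\ref{lem:konig}, partition $E^{<}$ into $H$ matchings $M_1,\dots,M_H$ (some possibly empty), and let $p_e$ be the index with $e\in M_{p_e}$. At $u\in V_1$ order the early edges by increasing $p_e$, and at $v\in V_2$ by decreasing $p_e$; these occupy ranks $1,\dots,\deg_{E^{<}}(\cdot)$. As in the proof of Corollary~\ref{cor:bounded-degree}, each vertex has at most one edge per matching, so $\rho_u(e)\le p_e$ and $\rho_v(e)\le H-p_e+1$. Hence $\rho_u(e)+\rho_v(e)-1\le H$, which is the first case of \eqref{eq:rank-bound}.

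\textbf{Step 3: ranks of late edges.} At each vertex, place the late edges after all early ones in nondecreasing order of $C_e$, breaking ties arbitrarily. Fix a late edge $e$ at $v$. Every edge ranked before $e$ at $v$ has deadline at most $C_e$, since early edges have $C<H+1\le C_e$. So $\rho_v(e)\le|\{e'\in\delta(v): C_{e'}\le C_e\}|$.

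Now choose $t\in(C_e,\lfloor C_e\rfloor+1)$, which is a nonempty interval. Then $\lfloor t\rfloor=\lfloor C_e\rfloor$, and every $e'$ with $C_{e'}\le C_e$ satisfies $C_{e'}<t$. The hypothesis gives $\rho_v(e)\le\lfloor C_e\rfloor$, and likewise $\rho_u(e)\le\lfloor C_e\rfloor$. Therefore $\rho_u(e)+\rho_v(e)-1\le 2\lfloor C_e\rfloor-1$, which is the second case.

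\textbf{Running time and main difficulty.} The only nontrivial computation is the K\H{o}nig decomposition, and everything else is sorting, so the construction runs in polynomial time. I expect the main subtlety to be the strict inequality $C_e<t$ in the hypothesis. Both steps need a limiting or perturbation argument in $t$ to reach the sharp bounds $H$ and $\lfloor C_e\rfloor$ rather than $H+1$ and $\lfloor C_e\rfloor+1$; the choices of $t$ above are what handle this.
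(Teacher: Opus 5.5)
Your proposal is correct and follows the paper's proof essentially step for step. It uses the same early/late split, the same K\H{o}nig-based ordering of early edges (increasing $p_e$ on $V_1$, decreasing on $V_2$), and the same sorting of late edges by deadline; your explicit choices $t\in(H,H+1)$ and $t\in(C_e,\lfloor C_e\rfloor+1)$ simply justify the two counting claims the paper states without argument.
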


\begin{proof}
    Denote $V = V_1 \sqcup V_2$. Let $E' = \{ e \in E ~|~ C_e < H + 1\}$ be the set of edges with deadline before $H + 1$. Note that each vertex is incident to at most $H$ edges from $E'$. By Lemma~\ref{lem:konig}, we can partition them into at most $H$ matchings: $E' = E_1 \sqcup \ldots \sqcup E_H$. For an edge $e \in E'$, let $p_e$ denote the matching to which it belongs $e \in E_{p_e}$. For a vertex $v \in V$, let us order all edges incident to it: $e_1, e_2, \ldots, e_k$ as follows. (1) Edges with $C_e < H + 1$ precede all edges with $C_e \geq H + 1$. (2) Edges with $C_e \geq H + 1$ are ordered by an increasing value of $C_e$. (3) Edges with $C_e < H + 1$ are ordered by an increasing value of $p_e$ if $v \in V_1$ and by a decreasing value of $p_e$ if $v \in V_2$. Define $\rho_v(e)$ as the index of $e$ in this order. Let us show that the constructed ranks $\rho_v$ satisfy the conditions of the lemma. 
    
    Let $e \in E', e = \{u, v\}$, where $u \in V_1, v \in V_2$. Then, the index $\rho_u(e)$ of $e$ in the order for vertex $u$ is at most $p_e$, while the index $\rho_v(e)$ of $e$ in the order for vertex $v$ is at most $H - p_e + 1$. So, $\rho_u(e) + \rho_v(e) - 1 \leq H$, satisfying the condition.

    Now consider the case $e \in E \setminus E'$. In this case there are at most $\lfloor C_e \rfloor - 1$ edges that precede $e$ both in the order for $v$ and $u$. Therefore, the sum $\rho_v(e) + \rho_u(e)$ is at most $2 \lfloor C_e \rfloor$.

Finally, note that König and computing orders for each vertex can be done in polynomial time.
\end{proof}

Recall that for an edge $e \in E$, we set $C_e$ as a deadline of its coflow. Since the deadlines $C_1, \ldots, C_n$ are fractionally feasible, for any vertex $v$ and $t \geq 0$, at most $\lfloor t \rfloor$ edges $e \in E$ are incident to it such that $C_e < t$, because we cannot schedule more than $\lfloor t \rfloor$ edges in a time window of length less than $t$. Therefore, the conditions of Lemma~\ref{lem:getranks} are satisfied. $\SARA_H$ computes ranks $\rho_v$ given by Lemma~\ref{lem:getranks}. Then, $\SARA_H$ defines $L_e = \{ r_e + 1, r_e + 2, \ldots, r_e + \rho_u(e) + \rho_v(e)- 1\}$ as a list of possible colors for an edge $e \in E, e = \{u, v\}$. It then applies Lemma~\ref{lem:galvin} to lists $L_e$ to get a coloring $c : E \to \mathbb{N}$ with a guarantee that $c(e) \in L_e$. The algorithm assigns edge $e$ to slot $c(e)$. By construction, this is a valid schedule since every vertex is incident to at most one edge of a given color. Let $\SARA_H(C_j, r_j)$ denote the finishing time of coflow $E_j$. Let us formally state the guarantee that the algorithm has:

\begin{lemma}
\label{lem:sara-bound}
For deadlines $C_1 \leq \ldots \leq C_n$ for which \eqref{lp:R} is feasible and any integer
$H\ge1$, algorithm $\SARA_H$ constructs in deterministic polynomial
time a valid schedule satisfying, for every $j\in[n]$,
\begin{equation}\label{eq:sarah-bound}
\SARA_H(C_j,r_j)\le r_j+
\begin{cases}
H,&C_j<H+1,\\
2\floor{C_j}-1,&C_j\ge H+1.
\end{cases}
\end{equation}
\end{lemma}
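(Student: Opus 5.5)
The plan is to assemble the guarantee directly from the construction of $\SARA_H$ described just before the statement. There are four checks: the hypothesis of Lemma~\ref{lem:getranks} holds, the produced lists meet the rank condition of Lemma~\ref{lem:galvin}, the resulting coloring is a valid schedule, and every slot used by an edge of $E_j$ is bounded by the right-hand side of \eqref{eq:sarah-bound}.

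\textbf{Step 1: the hypothesis of Lemma~\ref{lem:getranks}.} Fix $v$ and $t\ge0$, and let $x$ be a solution of \eqref{lp:R}. Every edge $e\ni v$ with $C_e<t$ has all its mass in blocks $(t_{s-1},t_s]$ with $t_s\le C_e<t$. Summing \eqref{eq:lp-R-capacity} over these blocks, the number of such edges is at most the largest breakpoint $t_s$ that is $<t$, hence less than $t$. When $t$ is an integer this count is at most $t-1\le\lfloor t\rfloor$. Otherwise it is an integer strictly below $t$, so it is at most $\lfloor t\rfloor$.

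Lemma~\ref{lem:getranks} then yields, in polynomial time, ranks $\rho_v$ satisfying \eqref{eq:rank-bound}.

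\textbf{Step 2: list coloring.} Set $L_e=\{r_e+1,\dots,r_e+\rho_u(e)+\rho_v(e)-1\}$. This list has size exactly $\rho_u(e)+\rho_v(e)-1\ge1$, so Lemma~\ref{lem:galvin} applies with $r_v=\rho_v$. It gives a proper coloring with $c(e)\in L_e$ in polynomial time.

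\textbf{Step 3: validity of the schedule.} Because the coloring is proper, each slot $\sigma^{-1}(t)$ is a matching. Because $c(e)\ge r_e+1$, every release constraint is respected.

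\textbf{Step 4: the completion-time bound.} For $e\in E_j$, we have $c(e)\le r_j+\rho_u(e)+\rho_v(e)-1$, and by \eqref{eq:rank-bound} this is at most $r_j+H$ if $C_j<H+1$, and at most $r_j+2\lfloor C_j\rfloor-1$ otherwise. Since $C_e=C_j$ for every edge of the coflow, the case split is uniform over $E_j$. Taking the maximum over $e\in E_j$ gives \eqref{eq:sarah-bound}.

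The only delicate point is the counting in Step~1, specifically the floor when $t$ is not integral. That case is resolved by noting that the number of edges is an integer and is strictly smaller than $t$ (or is bounded by $t_s<t$). Everything else is direct bookkeeping, and polynomial running time follows from the two invoked lemmas.
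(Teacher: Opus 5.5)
Your proposal is correct and follows the paper's argument. You verify the hypothesis of Lemma~\ref{lem:getranks} from the fractional feasibility of \eqref{lp:R}; your breakpoint-summation argument is a more careful version of the paper's one-line remark. You then apply Lemma~\ref{lem:galvin} to the lists $L_e$ and read off the bound from \eqref{eq:rank-bound}. The paper also makes the running time explicit by noting that each list has at most $2|E|-1$ slots; this follows from your observation that $|L_e|=\rho_u(e)+\rho_v(e)-1$ and that ranks are bounded by vertex degrees.
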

\begin{proof}
The lists satisfy Lemma~\ref{lem:galvin}, and every listed slot is
after the edge's release date. The bound follows from \eqref{eq:rank-bound}.

Each list has at most $2|E|-1$ slots because the ranks are positions among incident edges. Thus the total list size is $O(|E|^2)$. Matching search in Lemma~\ref{lem:galvin} works in polynomial time. Therefore, the whole procedure is polynomial.
\end{proof}

\subsection{Scheduling Algorithms LARA and KLARA}\label{sec:release-blocks}

Algorithms $\LARA_p^{\mathcal L}$ and $\KLARA_p^{\mathcal L}$
take a coflow instance with integer release dates, fractionally
feasible deadlines $C_1\le\cdots\le C_n$ and two additional parameters: a positive integer $p$,
and a nonempty list $\mathcal L=(l_1,\ldots,l_s)$ of positive integers.
Let $x$ be a feasible solution of \eqref{lp:R}, with LP boundaries
$0=t_0<t_1<\cdots<t_k$. The last entry of $\mathcal L$ repeats:
set $\ell_b=l_{\min\{b,s\}}$ for every $b\ge1$.

\textbf{1. Blocks and Fractional Assignment.}
Set $T_0=0$ and $T_b=\sum_{h=1}^b\ell_h$. Choose the smallest
$B\ge\max\{p,s\}$ with $T_B>C_n$, and form blocks $(T_0, T_1] \sqcup (T_1, T_2] \sqcup \ldots \sqcup (T_{B-1}, T_B]$.
One feasible fractional assignment to these blocks is

\begin{equation}\label{eq:release-fractional-assignment}
 z_{b,e}=\sum_{s=1}^{k}
   \frac{\lvert (t_{s-1},t_s]\cap(T_{b-1},T_b]\rvert}{\lvert (t_{s-1}, t_s]\rvert}
   x_{s,e},
 \qquad b\in[B],\ e\in E.
\end{equation}
Intuitively, $z_{b, e}$ denotes a fraction of the edge that is scheduled in block $b$ according to the fractional schedule $x_{s, e}$. A slightly relaxed version of condition \ref{eq:lp-R-window} holds for $z_{b, e}$:

\begin{claim}\label{claim:losebound}
For every $e\in E$ and $b\in[B]$,
\[
z_{b,e}=0\quad\text{if }T_b\le r_e\ \text{or}\ T_{b-1}\ge C_e.
\]
\end{claim}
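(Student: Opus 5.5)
Since $z_{b,e}$ is a nonnegative combination of the LP values $x_{s,e}$, it suffices to show that every term with a nonzero coefficient has $x_{s,e}=0$. Concretely, we fix $e$ and $b$ with $T_b\le r_e$ or $T_{b-1}\ge C_e$, and show that for every $s\in[k]$ either the overlap $\lvert (t_{s-1},t_s]\cap(T_{b-1},T_b]\rvert$ is zero or $x_{s,e}=0$ by \eqref{eq:lp-R-window}. Recall that \eqref{eq:lp-R-window} forces $x_{s,e}=0$ whenever $t_{s-1}<r_e$ or $t_s>C_e$. We use throughout that $r_e$ and $C_e$ are among the boundaries $t_0,\dots,t_k$ (or $r_e=0=t_0$), so no LP block straddles $r_e$ or $C_e$.

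\emph{Case 1: $T_b\le r_e$.} Suppose the overlap is positive. Then $t_{s-1}<T_b\le r_e$. Hence $t_{s-1}<r_e$, and \eqref{eq:lp-R-window} gives $x_{s,e}=0$. The term therefore vanishes; note that this case does not even need the straddling remark.

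\emph{Case 2: $T_{b-1}\ge C_e$.} Suppose the overlap is positive. Then $t_s>T_{b-1}\ge C_e$, so $t_s>C_e$, and again \eqref{eq:lp-R-window} gives $x_{s,e}=0$.

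In both cases every summand in \eqref{eq:release-fractional-assignment} is zero, so $z_{b,e}=0$. The only subtle point is to check that the overlap being positive really forces the strict inequalities used above. This holds because both intervals are half-open of the form $(\cdot,\cdot]$: a nonempty intersection requires $t_{s-1}<T_b$ and $T_{b-1}<t_s$. It is also worth noting that the denominators $\lvert (t_{s-1},t_s]\rvert$ are positive, since the $t_s$ are strictly increasing, so the formula is well defined. I expect no real obstacle; the argument is a direct unpacking of definitions.
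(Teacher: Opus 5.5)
Your proof is correct and uses the same argument as the paper. The paper simply notes that $(T_{b-1},T_b]$ has no positive-length overlap with the window $(r_e,C_e]$ on which $x_{\cdot,e}$ is supported, and you unpack this term by term using the overlap conditions for half-open intervals. The ``no straddling'' remark is also unnecessary in Case 2, since $t_s>T_{b-1}\ge C_e$ already triggers \eqref{eq:lp-R-window} directly.
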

\begin{proof}
In either case, the block $(T_{b-1},T_b]$ has no positive-length
intersection with the processing window $(r_e,C_e]$.
\end{proof}

\textbf{2. Rounding.} Next, we round the fractional schedule $z_{b, e}$. We apply Theorem~\ref{lem:rounding} with
\[
 \mathcal B=[B],\qquad
 \mathcal C=V\times[B],\qquad
 c_{v,b}=T_b-T_{b-1},
\]
and
\begin{equation}\label{eq:window-support}
 \mathcal B_e=\{b\in[B]:r_e<T_b\ \text{and}\ T_{b-1}<C_e\}.
\end{equation}
Note that $\calB_e$ is defined as a subset of blocks having non-empty intersection with $(r_e,C_e]$. The assignment $z$ satisfies \eqref{eq:ir-assign}--\eqref{eq:ir-capacity}. Applying
Theorem~\ref{lem:rounding} with protected block $p$ assigns every
edge to one block $b\in\mathcal B_e$, while violating each
vertex--block capacity by at most one for $b\ne p$ and not violating
it for $b=p$. Let $F_b\subseteq E$ be the edges assigned to block $b$.
Then
\begin{equation}\label{eq:release-degree}
 \Delta(F_b)\le d_b,
 \qquad d_b=\ell_b+1-\ind_{\{b=p\}}.
\end{equation}

\textbf{3. Scheduling via List Edge Coloring.}

Let us process blocks $b=1,\ldots,B$ in order. $S_b$ will upper bound the maximum finishing time of an edge in block $b$. Initially, $S_0 = 0$.
For every edge $e\in F_b$, form the list
\begin{equation}\label{eq:interval-lists}
 L_e=\{\max\{S_{b-1},r_e\}+1,\ldots,
        \max\{S_{b-1},r_e\}+d_b\}.
\end{equation}
The subgraph $(V, F_b)$ has degree at most $d_b$. Therefore, we can apply Corollary~\ref{cor:bounded-degree} to properly
color the edges of $F_b$ from these lists, and schedule each edge
in the slot given by its color. Therefore, the upper bound for block $b$ can be set as
\begin{equation}\label{eq:release-recurrence}
 S_b=\max\{S_{b-1},T_b-1\}+d_b,
\end{equation}
since edges $e \in F_b$ satisfy $r_e<T_b$ by
\eqref{eq:window-support}, and both $r_e$ and $T_b$ are integers.

Write $\LARA^{\mathcal{L}}_p(C_j, r_j)$ for the finishing time of $E_j$ in this schedule. We have the following bound on the schedule:

\textbf{KLARA Algorithm.}

Using the same preprocessing as $\KIRA$, we define a variant
$\KLARA$ of $\LARA$. $\KLARA$ will act as $\LARA$ but with one more phase in the beginning. It takes edges $J=\{e\in E:C_{j(e)}<2\}$ with deadline less than $2$ and schedules them in the first slot. $J$ forms a matching because otherwise it is impossible to schedule two edges sharing a vertex in window of size less than $2$, even fractionally. Note also any edge $e \in J$ should have release date $r_e = 0$ because $r_e + 1 \leq C_e < 2$ and $r_e$ is integer. $\KLARA$ schedules $J$ in slot one and removes its
coflows. After that, it executes steps $1, 2, 3$ on the remaining instance. Note that we have the same guarantees as for $\LARA$, since the removed part $J$ only simplifies the problem. In the third step, we initialize $S_0 = 1$ instead of $S_0 = 0$, since the first slot is already engaged.

Write $\KLARA^{\mathcal{L}}_p(C_j, r_j)$ for the resulting finishing time of $E_j$. 

Let us state the final guarantees that we have on the finishing time for $\LARA$ and $\KLARA$:

\begin{lemma}
    Let $C_1 \leq \ldots \leq C_n$ be deadlines for which \ref{lp:R} is feasible after relabeling the coflows. For a sequence of natural numbers $\mathcal{L} = (\ell_1, \ldots, \ell_b)$ and a natural number $p$, define 
    $$
    T_b = T_{b - 1} + \ell_b, \quad d_b = \ell_b + 1 - \mathbf1_{b = p},  \quad  S_b^q = \max \{ S^q_{b-1}, T_b - 1  \} + d_b
    $$
    with $T_0 = 0$ and $S_0^q = q$ for $q \in \{0, 1\}$.
    
   Let $b \in [B]$ be such that $C_j \in (T_{b-1}, T_b]$.  Then, 
    \begin{equation}\label{eq:coflow-local-bound}
 \LARA^{\mathcal{L}}_p(C_j, r_j)\le\max\{S^0_{b-1},r_j\}+d_b,
\end{equation}
    and 
    \begin{equation}\label{eq:klara-local-bound}
 \KLARA^{\mathcal{L}}_p(C_j, r_j)\le
 \begin{cases}
  1,&C_j<2,\\
  \max\{S^1_{b-1},r_j\}+d_b,&C_j\ge2.
 \end{cases}
\end{equation}
\end{lemma}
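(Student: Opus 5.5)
Fix $j$ and $b$ with $C_j\in(T_{b-1},T_b]$. The plan is to show that every edge $e\in E_j$ is placed in some block $b'\le b$, and that its slot is at most $\max\{S^0_{b'-1},r_e\}+d_{b'}$. Monotonicity of these quantities in $b'$ will then reduce everything to block $b$.

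\textbf{Edges stay in early blocks.} By \eqref{eq:window-support}, any block $b'\in\mathcal B_e$ satisfies $T_{b'-1}<C_e=C_j\le T_b$. Since the $T$'s are strictly increasing, this gives $b'-1<b$, i.e.\ $b'\le b$. Theorem~\ref{lem:rounding} assigns $e$ to a block in $\mathcal B_e$, so $e\in F_{b'}$ for some $b'\le b$.

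\textbf{Slot of an edge within its block.} The coloring of $F_{b'}$ from the lists \eqref{eq:interval-lists} gives $e$ a slot of at most $\max\{S_{b'-1},r_e\}+d_{b'}$, with $r_e=r_j$. Corollary~\ref{cor:bounded-degree} applies because $\Delta(F_{b'})\le d_{b'}$ by \eqref{eq:release-degree} and $|L_e|=d_{b'}$. I will also check by induction that the recurrence $S_{b'}$ in the algorithm coincides with $S^0_{b'}$ in the statement, and that $S_{b'}$ really upper bounds every slot used in block $b'$. This holds because each used slot is at most $\max\{S_{b'-1},r_e\}+d_{b'}\le\max\{S_{b'-1},T_{b'}-1\}+d_{b'}$, using $r_e\le T_{b'}-1$ (integers with $r_e<T_{b'}$). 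Lists are disjoint from earlier blocks' slots since they start above $S_{b'-1}$, so the schedule is valid.

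\textbf{Monotonicity.} It remains to show $\max\{S_{b'-1},r_j\}+d_{b'}\le\max\{S_{b-1},r_j\}+d_b$ for $b'<b$. The recurrence gives $S_{b-1}\ge S_{b'}\ge S_{b'-1}+d_{b'}$, since $S$ is nondecreasing and each step adds $d\ge0$. Hence $\max\{S_{b'-1},r_j\}+d_{b'}\le\max\{S_{b'-1}+d_{b'},r_j+d_{b'}\}$. The first term is at most $S_{b-1}$. The second term is the main obstacle, because $d_{b'}$ could exceed $d_b$; for example, $b=p$ loses one and $\ell$ varies. I would handle it using $r_j<T_{b'}$ for edges landing in $b'$, so that $r_j+d_{b'}\le T_{b'}-1+d_{b'}\le S_{b'}\le S_{b-1}$. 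In all cases the bound is at most $\max\{S_{b-1},r_j\}+d_b$, which proves \eqref{eq:coflow-local-bound}.

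\textbf{KLARA.} For $C_j<2$, the coflow's edges lie in $J$ and are placed in slot $1$. For $C_j\ge2$, the same argument applies to the reduced instance, which remains fractionally feasible, with $S^1_0=1$; the lists start at or after slot $2$, so they avoid the occupied first slot. This gives \eqref{eq:klara-local-bound}.
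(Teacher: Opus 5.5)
Your proof is correct and follows the same route as the paper. Every edge of $E_j$ is assigned to a block $b'\le b$: you read this directly off $\mathcal B_e$ in \eqref{eq:window-support}, where the paper cites Claim~\ref{claim:losebound}. Each such edge then finishes inside its block's list window. You also make explicit a step the paper leaves implicit: an edge in an earlier block $b'<b$ finishes by $S_{b'}\le S_{b-1}$, since $r_e\le T_{b'}-1$. This is exactly what is needed, because $d_{b'}$ may exceed $d_b$.
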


\begin{proof}
    Let $e \in E_j$ be an edge. By Claim~\ref{claim:losebound}, this edge lies in block $b$ or an earlier one. Therefore, it is scheduled by $\max\{S_{b-1}^0, r_j\} + d_b$ for $\LARA$ and by $\max\{S_{b-1}^1, r_j\} + d_b$ for $\KLARA$. Moreover, if $C_j < 2$, it is scheduled in the first slot in case of $\KLARA$.
\end{proof}

Finally, let us remark that both algorithms can be implemented in polynomial time. Note that the only issue is that release dates can take exponentially large values. 

Assume we have a release date $r_j$, such that for any other release date $r_i < r_j$, it holds that $r_i + |E| < r_j$. Then, without loss of generality we may assume that for release dates $r_i < r_j$ all edges are scheduled before $r_j$. If it does not hold, then we can take an edge that is scheduled after $r_j$ and move it to an earlier slot (there should be a free slot since we have at most $|E|$ edges). Therefore, our instance partitions into two independent instances: with release dates $< r_j$ and with release dates $\geq r_j$. So, w.l.o.g. we can assume that $|r_j - r_i| \leq n \cdot (|E| + 1)$. 

The slots before the minimal release date $\min_{i=1}^n r_i$ are not used. Thus, we can only consider slots after $\min_{i=1}^n r_i$. There are polynomially many of them, implying that the whole algorithm can be implemented in polynomial time.

\subsection{Final Bound for Coflow Scheduling with Release Dates}
\label{sec:release-proof}\label{sec:release-calendars}

The final procedure works as follows. We compute deadlines using
Lemma~\ref{lem:release-deadlines} and relabel the coflows so that
$C_1\le\cdots\le C_n$. We then run each of the seven algorithms in
Table~\ref{tab:release-seven} on these deadlines and return a schedule
of minimum weighted completion cost. The weights in the table are
used only to bound the minimum cost by the weighted average.

\begin{table}[htbp]
\caption{The seven scheduling algorithms: indices $i$ (first row),
algorithms (second row), and analysis weights $a_i$ (third row).
The last entry of each block-length tuple $\mathcal{L}$ of $\LARA_p^{\mathcal L}$ and $\KLARA_p^{\mathcal L}$ repeats indefinitely.}
\label{tab:release-seven}\label{tab:release-six}
\centering
\small
\setlength{\tabcolsep}{4pt}
\begin{tabular}{@{}*{7}{c}@{}}
\toprule
$1$ & $2$ & $3$ & $4$ & $5$ & $6$ & $7$\\
\midrule
$\SARA_3$ & $\SARA_4$ & $\SARA_5$
& $\LARA_2^{(1,2)}$
& $\LARA_2^{(1,3,2)}$
& $\LARA_2^{(1,2,2,1)}$
& $\KLARA_1^{(2)}$\\
\midrule
$16/575$ & $2/575$ & $3/575$ & $258/575$
& $74/575$ & $42/575$ & $180/575$\\
\bottomrule
\end{tabular}
\end{table}

\begin{proof}[Proof of Theorem~\ref{thm:main-release}]
Let $\sigma_i$ be the schedule produced by the algorithm in column $i$
of Table~\ref{tab:release-seven}, with corresponding weight $a_i$.
If schedule $\sigma_i$ is chosen with probability $a_i$, the expected
completion time of coflow $E_j$ is
\[
 D_j=\sum_{i=1}^7 a_i C_j^*(\sigma_i).
\]
We have the following bound.

\begin{claim}\label{clm:release-mean}
For every $j\in[n]$, $575D_j\le900(C_j+1)+21(r_j+1)$.
\end{claim}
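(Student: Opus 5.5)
We bound each of the seven completion times pointwise as a function of the pair $(C_j,r_j)$, using Lemma~\ref{lem:sara-bound} for $\SARA_3,\SARA_4,\SARA_5$ and the local bounds \eqref{eq:coflow-local-bound} and \eqref{eq:klara-local-bound} for the $\LARA$ and $\KLARA$ schedules. We then verify the linear inequality
\[
575\sum_i a_i F_i(C_j,r_j)\le 900(C_j+1)+21(r_j+1),
\]
where $F_i$ denotes these upper bounds. Since $C_j\ge r_j+1$ (Lemma~\ref{lem:release-deadlines}) and $r_j$ is a nonnegative integer, the domain is $\{(C,r): r\in\mathbb Z_{\ge0},\ C\ge r+1\}$.

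The first step is to make the $\LARA$/$\KLARA$ bounds explicit. For each block-length tuple ($(1,2)$ with $p=2$; $(1,3,2)$ with $p=2$; $(1,2,2,1)$ with $p=2$; $(2)$ with $p=1$ and $S_0=1$), we compute $T_b$, $d_b$ and $S_b$ from the recurrence \eqref{eq:release-recurrence}. After the first few blocks the lengths are periodic, so $S_b$ becomes eventually periodic. Typically $S_b=T_b+(\text{number of ordinary blocks so far})$ minus a constant, because $T_b-1<S_{b-1}$ once an extra slot accumulates. We therefore obtain closed forms of the type $S_{b-1}\approx T_{b-1}+\beta_{b-1}$, where $\beta$ counts the $+1$ violations; in the periodic part this gives $S_{b-1}\approx \frac{\ell+1}{\ell}T_{b-1}+O(1)$. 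Each $\LARA$ bound then becomes $\max\{S_{b-1},r_j\}+d_b$, where $b$ is the block containing $C_j$, that is, a piecewise function of $C$ with the term $\max\{\cdot,r\}$.

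The second step is the case analysis. We split according to whether $r_j\le S_{b-1}$ or not for each algorithm. Since the functions are piecewise linear in $C$ with step structure, it suffices to check the inequality at the worst point of each piece: $C$ just above a block boundary $T_{b-1}$, or just above an integer for the $\SARA$ floor terms. In the $r$-direction, the right-hand side grows with slope $21/575$ in $r$, while each $\max\{S,r\}+d$ grows with slope at most $1$ in $r$ beyond $S$. Hence the worst $r$ is either $0$ or the largest value allowed ($r\le C-1$), where $\max\{S,r\}\le C-1+d$ is small relative to $C$. For small $C$, below roughly $12$ where the periodic regime has not yet set in, we enumerate finitely many intervals of $C$ and the finitely many integer values of $r\le C-1$ directly. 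For large $C$, we compare slopes: the $\SARA$ terms contribute at most $2C$ and the $\LARA$ terms roughly $\frac{3}{2}C$ or $\frac{4}{3}C$. The weighted average of slopes must stay below $900/575$, and the additive constants must be absorbed by the $+900$ and $+21$ slack. Because the block patterns are periodic with period dividing a small number (such as $2$ or $3$), checking one full period in the asymptotic regime, with $C$ written as $T+\theta$, reduces the remaining inequality to a linear one in $C$ that holds for all large $C$.

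The main obstacle is the bookkeeping for the mixed regime in which $r_j$ exceeds some of the $S_{b-1}$ but not others. The $\max$ terms then switch at different thresholds for different algorithms, and the weights were tuned by computer so that the slack is tight. I would handle this by treating $r$ as a continuous variable on each $C$-interval and observing that the left-hand side is convex piecewise linear in $r$ while the right-hand side is linear. The inequality therefore only needs checking at $r=0$, at $r=C-1$, and at the breakpoints $r=S_{b-1}$ of each algorithm. This produces a finite, if tedious, list of numerical checks per interval of $C$, plus the one asymptotic linear check.
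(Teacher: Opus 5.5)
Your plan is sound and has the same skeleton as the paper's proof. It uses the same pointwise bounds from Lemma~\ref{lem:sara-bound}, \eqref{eq:coflow-local-bound} and \eqref{eq:klara-local-bound}. It takes the worst case as $C_j\downarrow m:=\floor{C_j}$; the paper does this by bounding everything in terms of $m$ and using $900(m+1)\le900(C_j+1)$. It then does a finite check for small $m$ and a linear check for large $m$, and your convexity-in-$r$ argument is valid.

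What you miss is the observation that makes the paper's proof short: the ``mixed regime'' you call the main obstacle never occurs. Since $r_j$ is an integer with $r_j\le C_j-1$, you have $r_j\le m-1$. For each of $\sigma_4,\dots,\sigma_7$, let $k$ be the first index with $T_k>m$; the explicit values \eqref{eq:virtual-endpoints} give $S_{k-1}\ge m-1$. Hence every $\max\{S_{k-1},r_j\}$ equals $S_{k-1}$, and the four block bounds do not depend on $r_j$ at all. The only $r_j$-dependence on the left is the total $\SARA$ weight $16+2+3=21$, which cancels exactly against $21(r_j+1)$. The claim therefore collapses to the one-variable certificate $Q(m)\le900m+921$. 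Since $r_j$ never exceeds the relevant $S_{k-1}$, your breakpoints $r=S_{b-1}$ play no role, and your endpoint checks $r=0$ and $r=m-1$ both reduce to this same inequality.

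If you carry out your version, two cautions apply.

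First, the certificate is tight: $Q(m)=900m+921$ for every $m=1,\dots,7$. Approximate forms such as $S_{b-1}\approx\frac{\ell+1}{\ell}T_{b-1}+O(1)$ are therefore useless; you need the exact recurrences. This includes the integral case $C_j=T_{k-1}$, where all edges of $E_j$ already finish by $S_{k-1}$.

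Second, your asymptotic bookkeeping is off. No slope $4/3$ occurs, whereas $\sigma_6$ has slope $2$ because its last block length $1$ repeats. The weighted slope is $21\cdot2+(258+74+180)\cdot\frac32+42\cdot2=894$, against $900$. Write $\varepsilon_m\in\{0,1\}$ for the parity of $m$. The asymptotic constant $957+6\varepsilon_m$ exceeds $921$, so it is not absorbed by constant slack. It is absorbed only by the slope gap $6m$, and this happens exactly for $m\ge6+\varepsilon_m$. Also, in your normalization the right-hand side has $r$-slope $21$, not $21/575$.
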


The claim follows from Lemma~\ref{lem:sara-bound},
\eqref{eq:coflow-local-bound}, \eqref{eq:klara-local-bound}, and the
parameters in Table~\ref{tab:release-seven}. Its proof is given in
Appendix~\ref{app:release-claim}.

Let $\sigma$ be the schedule returned. Since $a_i\ge0$ and
$\sum_{i=1}^7a_i=1$, we obtain
\begin{align*}
 \cost(\sigma)
 &=\min_{i\in[7]}\cost(\sigma_i)
 \le\sum_{i=1}^7 a_i\cost(\sigma_i)\\
 &=\sum_{j\in[n]}\omega_jD_j\\
 &\le\frac{900}{575}\sum_{j\in[n]}\omega_j(C_j+1)
     +\frac{21}{575}\sum_{j\in[n]}\omega_j(r_j+1)\\
 &\le\left(2\cdot\frac{900}{575}+\frac{21}{575}\right)\OPT
 =\frac{1821}{575}\OPT,
\end{align*}
where the last inequality follows from
\eqref{eq:release-deadline-potential} and~\eqref{eq:release-lower-bound}.
The deadline construction, all seven scheduling algorithms, and the
final cost comparison take polynomial time. Fixed tie-breaking
makes the procedure deterministic.
\end{proof}

\bibliographystyle{ACM-Reference-Format}
\bibliography{references}

@article{Mastrolilli2010,
author = {Monaldo Mastrolilli and Maurice Queyranne
and Andreas S. Schulz and Ola Svensson and Nelson A. Uhan},
title = {Minimizing the Sum of Weighted Completion Times
in a Concurrent Open Shop},
journal = {Operations Research Letters},
year = {2010},
volume = {38},
number = {5},
pages = {390--395},
doi = {10.1016/j.orl.2010.04.011}
}

@inproceedings{SachdevaSaket2013,
author = {Sushant Sachdeva and Rishi Saket},
title = {Optimal Inapproximability for Scheduling Problems via
Structural Hardness for Hypergraph Vertex Cover},
booktitle = {2013 IEEE Conference on Computational Complexity},
year = {2013},
pages = {219--229},
publisher = {IEEE},
doi = {10.1109/CCC.2013.30}
}

@misc{ImPurohit2018,
author = {Sungjin Im and Manish Purohit},
title = {A Tight Approximation for Co-flow Scheduling
for Minimizing Total Weighted Completion Time},
year = {2018},
eprint = {1707.04331},
archivePrefix = {arXiv},
primaryClass = {cs.DS},
doi = {10.48550/arXiv.1707.04331},
url = {https://arxiv.org/abs/1707.04331v2},
note = {Version 2. Includes an authors' note identifying
an error in the claimed approximation guarantee}
}

@inproceedings{Malewicz2010,
author = {Grzegorz Malewicz and Matthew H. Austern and Aart J. C. Bik
and James C. Dehnert and Ilan Horn and Naty Leiser
and Grzegorz Czajkowski},
title = {{Pregel}: A System for Large-Scale Graph Processing},
booktitle = {Proceedings of the 2010 ACM SIGMOD International
Conference on Management of Data},
year = {2010},
pages = {135--146},
publisher = {ACM},
doi = {10.1145/1807167.1807184}
}

@inproceedings{Agarwal2018,
  author    = {Agarwal, Saksham and Rajakrishnan, Shijin and Narayan, Akshay
               and Agarwal, Rachit and Shmoys, David and Vahdat, Amin},
  title     = {{Sincronia: Near-optimal network design for coflows}},
  booktitle = {Proceedings of the 2018 Conference of the ACM Special Interest
               Group on Data Communication},
  series    = {SIGCOMM '18},
  year      = {2018},
  pages     = {16--29},
  publisher = {Association for Computing Machinery},
  address   = {New York, NY, USA},
  doi       = {10.1145/3230543.3230569}
}

@inproceedings{Ahmadi2017,
  author    = {Ahmadi, Saba and Khuller, Samir and Purohit, Manish and Yang, Sheng},
  title     = {{On Scheduling Coflows}},
  booktitle = {Integer Programming and Combinatorial Optimization (IPCO 2017)},
  editor    = {Eisenbrand, Friedrich and Koenemann, Jochen},
  series    = {Lecture Notes in Computer Science},
  volume    = {10328},
  year      = {2017},
  pages     = {13--24},
  publisher = {Springer International Publishing},
  address   = {Cham, Switzerland},
  doi       = {10.1007/978-3-319-59250-3_2}
}

@book{AignerZiegler2018,
  author    = {Aigner, Martin and Ziegler, G{\"u}nter M.},
  title     = {{Proofs from THE BOOK}},
  edition   = {6th},
  year      = {2018},
  publisher = {Springer},
  address   = {Berlin, Heidelberg},
  isbn      = {978-3-662-57264-1},
  doi       = {10.1007/978-3-662-57265-8},
  note      = {Chapter~38, ``The Dinitz problem'', pp.~271--276}
}

@article{BeckFiala1981,
  author  = {Beck, J{\'o}zsef and Fiala, Tibor},
  title   = {{``Integer-making'' theorems}},
  journal = {Discrete Applied Mathematics},
  year    = {1981},
  volume  = {3},
  number  = {1},
  pages   = {1--8},
  doi     = {10.1016/0166-218X(81)90022-6}
}

@article{BKW1997,
  author  = {Borodin, Oleg V. and Kostochka, Alexandr V. and Woodall, Douglas R.},
  title   = {{List Edge and List Total Colourings of Multigraphs}},
  journal = {Journal of Combinatorial Theory, Series B},
  year    = {1997},
  volume  = {71},
  number  = {2},
  pages   = {184--204},
  doi     = {10.1006/jctb.1997.1780}
}

@inproceedings{Chowdhury2012,
  author    = {Chowdhury, Mosharaf and Stoica, Ion},
  title     = {{Coflow: A networking abstraction for cluster applications}},
  booktitle = {Proceedings of the 11th ACM Workshop on Hot Topics in Networks},
  series    = {HotNets-XI},
  year      = {2012},
  pages     = {31--36},
  publisher = {Association for Computing Machinery},
  address   = {New York, NY, USA},
  doi       = {10.1145/2390231.2390237}
}

@inproceedings{Chowdhury2014,
  author    = {Chowdhury, Mosharaf and Zhong, Yuan and Stoica, Ion},
  title     = {{Efficient coflow scheduling with Varys}},
  booktitle = {Proceedings of the 2014 ACM Conference on SIGCOMM},
  series    = {SIGCOMM '14},
  year      = {2014},
  pages     = {443--454},
  publisher = {Association for Computing Machinery},
  address   = {New York, NY, USA},
  doi       = {10.1145/2619239.2626315}
}

@inproceedings{Fukunaga2022,
  author    = {Fukunaga, Takuro},
  title     = {{Integrality Gap of Time-Indexed Linear Programming Relaxation
               for Coflow Scheduling}},
  booktitle = {Approximation, Randomization, and Combinatorial Optimization.
               Algorithms and Techniques (APPROX/RANDOM 2022)},
  editor    = {Chakrabarti, Amit and Swamy, Chaitanya},
  series    = {Leibniz International Proceedings in Informatics (LIPIcs)},
  volume    = {245},
  year      = {2022},
  pages     = {36:1--36:13},
  publisher = {Schloss Dagstuhl -- Leibniz-Zentrum f{\"u}r Informatik},
  address   = {Dagstuhl, Germany},
  isbn      = {978-3-95977-249-5},
  doi       = {10.4230/LIPIcs.APPROX/RANDOM.2022.36}
}

@article{GaleShapley1962,
  author  = {Gale, David and Shapley, Lloyd S.},
  title   = {{College Admissions and the Stability of Marriage}},
  journal = {The American Mathematical Monthly},
  year    = {1962},
  volume  = {69},
  number  = {1},
  pages   = {9--15},
  doi     = {10.2307/2312726}
}

@article{Galvin1995,
  author  = {Galvin, Fred},
  title   = {{The List Chromatic Index of a Bipartite Multigraph}},
  journal = {Journal of Combinatorial Theory, Series B},
  year    = {1995},
  volume  = {63},
  number  = {1},
  pages   = {153--158},
  doi     = {10.1006/jctb.1995.1011}
}

@inproceedings{Im2019,
  author    = {Im, Sungjin and Moseley, Benjamin and Pruhs, Kirk and Purohit, Manish},
  title     = {{Matroid Coflow Scheduling}},
  booktitle = {46th International Colloquium on Automata, Languages, and
               Programming (ICALP 2019)},
  editor    = {Baier, Christel and Chatzigiannakis, Ioannis and Flocchini, Paola
               and Leonardi, Stefano},
  series    = {Leibniz International Proceedings in Informatics (LIPIcs)},
  volume    = {132},
  year      = {2019},
  pages     = {145:1--145:13},
  publisher = {Schloss Dagstuhl -- Leibniz-Zentrum f{\"u}r Informatik},
  address   = {Dagstuhl, Germany},
  isbn      = {978-3-95977-109-2},
  doi       = {10.4230/LIPIcs.ICALP.2019.145}
}

@inproceedings{KhullerPurohit2016,
  author    = {Khuller, Samir and Purohit, Manish},
  title     = {{Brief Announcement: Improved Approximation Algorithms
               for Scheduling Co-Flows}},
  booktitle = {Proceedings of the 28th ACM Symposium on Parallelism
               in Algorithms and Architectures},
  series    = {SPAA '16},
  year      = {2016},
  pages     = {239--240},
  publisher = {Association for Computing Machinery},
  address   = {New York, NY, USA},
  doi       = {10.1145/2935764.2935809}
}

@article{Konig1916,
  author  = {K{\H{o}}nig, D{\'e}nes},
  title   = {{{\"U}ber Graphen und ihre Anwendung auf Determinantentheorie
             und Mengenlehre}},
  journal = {Mathematische Annalen},
  year    = {1916},
  volume  = {77},
  pages   = {453--465},
  doi     = {10.1007/BF01456961}
}

@book{LRS2011,
  author    = {Lau, Lap Chi and Ravi, R. and Singh, Mohit},
  title     = {{Iterative Methods in Combinatorial Optimization}},
  series    = {Cambridge Texts in Applied Mathematics},
  year      = {2011},
  publisher = {Cambridge University Press},
  address   = {Cambridge, UK},
  isbn      = {978-1-107-00751-2},
  doi       = {10.1017/CBO9780511977152}
}

@inproceedings{Qiu2015,
  author    = {Qiu, Zhen and Stein, Cliff and Zhong, Yuan},
  title     = {{Minimizing the Total Weighted Completion Time of Coflows
               in Datacenter Networks}},
  booktitle = {Proceedings of the 27th ACM Symposium on Parallelism
               in Algorithms and Architectures},
  series    = {SPAA '15},
  year      = {2015},
  pages     = {294--303},
  publisher = {Association for Computing Machinery},
  address   = {New York, NY, USA},
  doi       = {10.1145/2755573.2755592}
}

@inproceedings{RS2025,
  author    = {Rohwedder, Lars and Schnaars, Leander},
  title     = {{3.415-Approximation for Coflow Scheduling via Iterated Rounding}},
  booktitle = {52nd International Colloquium on Automata, Languages, and
               Programming (ICALP 2025)},
  editor    = {Censor-Hillel, Keren and Grandoni, Fabrizio and Ouaknine, Jo{\"e}l
               and Puppis, Gabriele},
  series    = {Leibniz International Proceedings in Informatics (LIPIcs)},
  volume    = {334},
  year      = {2025},
  pages     = {128:1--128:19},
  publisher = {Schloss Dagstuhl -- Leibniz-Zentrum f{\"u}r Informatik},
  address   = {Dagstuhl, Germany},
  isbn      = {978-3-95977-372-0},
  doi       = {10.4230/LIPIcs.ICALP.2025.128}
}

@misc{RS2025full,
  author        = {Rohwedder, Lars and Schnaars, Leander},
  title         = {{3.415-Approximation for Coflow Scheduling via Iterated Rounding}},
  year          = {2025},
  eprint        = {2502.21197},
  archivePrefix = {arXiv},
  primaryClass  = {cs.DS},
  doi           = {10.48550/arXiv.2502.21197},
  note          = {Full version of the ICALP 2025 paper}
}

@inproceedings{RS2026,
  author    = {Rohwedder, Lars and Schnaars, Leander},
  title     = {{Graph Scheduling with Group Completion Times}},
  booktitle = {34th Annual European Symposium on Algorithms (ESA 2026)},
  editor    = {Bille, Philip and Pettie, Seth and Storandt, Sabine},
  series    = {Leibniz International Proceedings in Informatics (LIPIcs)},
  volume    = {388},
  year      = {2026},
  pages     = {100:1--100:22},
  publisher = {Schloss Dagstuhl -- Leibniz-Zentrum f{\"u}r Informatik},
  address   = {Dagstuhl, Germany},
  isbn      = {978-3-95977-445-1},
  doi       = {10.4230/LIPIcs.ESA.2026.100}
}

@article{ShafieeGhaderi2017,
  author  = {Shafiee, Mehrnoosh and Ghaderi, Javad},
  title   = {{Scheduling Coflows in Datacenter Networks: Improved Bound
             for Total Weighted Completion Time}},
  journal = {ACM SIGMETRICS Performance Evaluation Review},
  year    = {2017},
  volume  = {45},
  number  = {1},
  pages   = {29--30},
  doi     = {10.1145/3143314.3078548},
  note    = {SIGMETRICS 2017 extended abstract}
}

@article{ShafieeGhaderi2018,
  author        = {Shafiee, Mehrnoosh and Ghaderi, Javad},
  title         = {{An Improved Bound for Minimizing the Total Weighted
                   Completion Time of Coflows in Datacenters}},
  journal       = {IEEE/ACM Transactions on Networking},
  year          = {2018},
  volume        = {26},
  number        = {4},
  pages         = {1674--1687},
  doi           = {10.1109/TNET.2018.2845852},
  eprint        = {1704.08357},
  archivePrefix = {arXiv},
  primaryClass  = {cs.DS}
}

\appendix
\section{Proof of the Completion-Time Bound}
\label{app:no-release-claim}

\begin{proof}[Proof of Claim~\ref{clm:no-release-mean}]
Fix $j\in[n]$. Recall that
\[
 D_j=\sum_{i=1}^6 a_i C_j^*(\sigma_i),
 \qquad
 (a_1,\ldots,a_6)=\frac1{21}(3,2,5,5,1,5).
\]
We apply Lemmas~\ref{prop:calendar-response} and~\ref{prop:kira-bound}
and distinguish three cases.

If $1\le C_j<2$, the six completion-time bounds are
$3,4,4,3,1,1$, respectively. Therefore,
\[
 21D_j
 \le3\cdot3+2\cdot4+5\cdot4+5\cdot3+1+5
 =58\le29(C_j+1).
\]
If $C_j=2$, the last two bounds become $5$ and $6$, while the first
four remain unchanged. Hence
\[
 21D_j
 \le3\cdot3+2\cdot4+5\cdot4+5\cdot3+5+5\cdot6
 =87=29(C_j+1).
\]

Suppose now that $C_j>2$, and put $m=\ceil{C_j}$.
The bounds for the six schedules give
\begin{align*}
 21D_j
 &\le3P_{3,2}(C_j)+2P_{4,2}(C_j)
      +5P_{4,3}(C_j)+5P_{3,3}(C_j)\\
 &\quad+\bigl(1+P_{4,2}(C_j)\bigr)
      +5\bigl(1+P_{5,3}(C_j)\bigr)\\
 &=3\bigl(P_{3,2}(C_j)+P_{4,2}(C_j)\bigr)\\
 &\quad+5\bigl(P_{3,3}(C_j)+P_{4,3}(C_j)
              +P_{5,3}(C_j)\bigr)+6.
\end{align*}
The constant $6=1+5$ accounts for the one-slot shifts in the two
$\KIRA$ schedules.

To evaluate the two sums, we use
\[
 \sum_{h=0}^{t-1}\ceil{\frac{z-h}{t}}=\ceil{z}
 \qquad(z\in\mathbb{R},\ t\in\mathbb{Z}_{\ge1}).
\]
Indeed, writing $\ceil{z}=qt+r$ with integers $q,r$ and $0\le r<t$,
the summands equal $q+1$ for $h<r$ and $q$ otherwise.

Since $C_j>2$, all ceiling terms in the five relevant $P$-functions
are nonnegative, so the maxima with zero can be omitted.
Applying the identity with $z=C_j-3$ and $t\in\{2,3\}$ yields
\begin{align*}
 P_{3,2}(C_j)+P_{4,2}(C_j)
 &=7+3\sum_{h=0}^{1}\ceil{\frac{C_j-3-h}{2}}\\
 &=7+3(m-3)=3m-2,\\
 P_{3,3}(C_j)+P_{4,3}(C_j)+P_{5,3}(C_j)
 &=12+4\sum_{h=0}^{2}\ceil{\frac{C_j-3-h}{3}}\\
 &=12+4(m-3)=4m.
\end{align*}
Consequently,
\[
 21D_j\le3(3m-2)+5(4m)+6
 =29m\le29(C_j+1),
\]
which completes the proof.
\end{proof}

\section{Proof of the Completion-Time Bound with Release Dates}
\label{app:release-claim}

\begin{proof}[Proof of Claim~\ref{clm:release-mean}]
Fix $j\in[n]$ and put
\begin{equation}\label{eq:release-m}
 m=\floor{C_j}\ge1,
 \qquad 0\le r_j\le m-1,
\end{equation}
where the latter inequality follows from $C_j\ge r_j+1$ and the
integrality of $r_j$. Lemma~\ref{lem:sara-bound} gives
\[
 C_j^*(\sigma_i)\le r_j+h_{i+2}(m)
 \qquad(i\in\{1,2,3\}),
\]
where, for $k\in\{3,4,5\}$,
\[
 h_k(m)=
 \begin{cases}
  k,&1\le m\le k,\\
  2m-1,&m\ge k+1.
 \end{cases}
\]
For $i\in\{4,5,6,7\}$, let $U_i(m)$ be the bound listed for
$\sigma_i$ in Table~\ref{tab:release-responses}. We first show that
$C_j^*(\sigma_i)\le U_i(m)$.

\begin{table}[htbp]
\caption{Upper bounds $U_i(m)$ on $C_j^*(\sigma_i)$ for the four
block schedules in Table~\ref{tab:release-seven}, where
$m=\floor{C_j}$.}
\label{tab:release-responses}
\centering
\begin{tabular}{@{}ccccc@{}}
\toprule
Schedule & $m=1$ & $m=2$ & $m=3$ & $m\ge4$\\
\midrule
$\sigma_4$ & $4$ & $4$ & $7$ & $1+3\ceil{m/2}$\\
$\sigma_5$ & $5$ & $5$ & $5$ & $3+3\floor{m/2}$\\
$\sigma_6$ & $4$ & $4$ & $7$ & $2m-1$\\
$\sigma_7$ & $1$ & $6$ & $6$ & $3+3\floor{m/2}$\\
\bottomrule
\end{tabular}
\end{table}

For each fixed schedule, $T_b$ and $d_b$ refer to its own block
parameters. We write $S_b=S_b^0$ for $\sigma_4,\sigma_5,\sigma_6$
and $S_b=S_b^1$ for $\sigma_7$. Substituting the initial block lengths into
\eqref{eq:release-recurrence} and then applying induction gives
\begin{equation}\label{eq:virtual-endpoints}
\begin{array}{c|l}
\sigma_4 & S_1=2,\quad S_b=3b-2\ (b\ge2),\\
\sigma_5 & S_1=2,\quad S_b=3b\ (b\ge2),\\
\sigma_6 & S_1=2,\quad S_2=4,\quad S_b=2b+1\ (b\ge3),\\
\sigma_7 & S_0=1,\quad S_b=3b\ (b\ge1).
\end{array}
\end{equation}
For $\sigma_7$, the case $m=1$ is handled by the initial matching,
so $U_7(1)=1$.
In every other case, let $k$ be the first index with $T_k>m$.
Since the boundaries are integers, $T_{k-1}\le C_j<T_k$.
If $C_j=T_{k-1}$, all edges of $E_j$ finish by $S_{k-1}$;
otherwise apply \eqref{eq:coflow-local-bound} for
$\sigma_4,\sigma_5,\sigma_6$, or \eqref{eq:klara-local-bound} for
$\sigma_7$. Thus, in either case,
\[
 C_j^*(\sigma_i)\le\max\{S_{k-1},r_j\}+d_k.
\]
The indices below and \eqref{eq:virtual-endpoints} give
$S_{k-1}\ge m-1\ge r_j$, so this bound reduces to $S_{k-1}+d_k$.

For $\sigma_4$, $k=\ceil{m/2}+1$. The bound is $4$ for $m=1,2$
and $3k-2=1+3\ceil{m/2}$ otherwise.
For $\sigma_5$, $m=1,2,3$ gives $k=2$ and
$S_1+d_2=2+3=5$. For $m\ge4$, $k=\floor{m/2}+1$ gives
$3k=3+3\floor{m/2}$.
For $\sigma_6$, $m=1,2,3,4$ gives bounds $4,4,7,7$;
for $m\ge5$, $k=m-1$ gives $2k+1=2m-1$.
For $\sigma_7$ and $m\ge2$, $k=\floor{m/2}+1$ gives
$3k=3+3\floor{m/2}$. This proves all bounds in
Table~\ref{tab:release-responses}, including at integral deadlines.

Set
\begin{align*}
 Q(m)={}&16h_3(m)+2h_4(m)+3h_5(m)\\
       &+258U_4(m)+74U_5(m)+42U_6(m)+180U_7(m).
\end{align*}
We show that
\begin{equation}\label{eq:release-certificate}
 Q(m)\le900m+921
 \qquad(m\ge1).
\end{equation}
For $m=1,2,3,4,5$, direct substitution gives
\[
 \begin{array}{c|rrrrr}
  m&1&2&3&4&5\\
  \hline
  Q(m)&1821&2721&3621&4521&5421
 \end{array}
\]
and hence $Q(m)=900m+921$ in these cases.
For $m\ge6$, all three $h$-functions equal $2m-1$. Writing
$\varepsilon_m=0$ for even $m$ and $\varepsilon_m=1$ for odd $m$,
we obtain
\begin{align*}
 Q(m)
 &=21(2m-1)+258(1+3\ceil{m/2})\\
 &\quad+254(3+3\floor{m/2})+42(2m-1)\\
 &=894m+957+6\varepsilon_m.
\end{align*}
Since $m-\varepsilon_m\ge6$ for every integer $m\ge6$,
\[
 900m+921-Q(m)=6(m-6-\varepsilon_m)\ge0,
\]
proving \eqref{eq:release-certificate}.

Finally, the weights in Table~\ref{tab:release-seven} give
\begin{align*}
 575D_j
 &\le(16+2+3)r_j+Q(m)\\
 &\le21r_j+900m+921\\
 &=900(m+1)+21(r_j+1)\\
 &\le900(C_j+1)+21(r_j+1).
\end{align*}
\end{proof}
\end{document}